\newenvironment{proof}{\par \noindent \textsc{Proof.} }{\hfill$\Box$\medskip}
\newtheorem{theorem}{Theorem}
\newtheorem{lemma}[theorem]{Lemma}
\newtheorem{corollary}[theorem]{Corollary}
\begin{document}

\title{Logics for complexity classes}
\author{Vladimir Naidenko\thanks{E-mail: \href{mailto:naidenko@open.by}{\texttt{naidenko@open.by}}}
\\Institute of Mathematics\\National Academy of Sciences of Belarus
\\ul. Surganova 11, Minsk 220012, Belarus
}
\maketitle

\begin{sloppypar}
\begin{abstract}
A new syntactic characterization of problems complete via Turing reductions is presented. General canonical forms are
developed in order to define such problems. One of these forms allows us to define complete problems on ordered
structures, and another form to define them on unordered non-Aristotelian structures. Using the canonical forms, logics
are developed for complete problems in various complexity classes. Evidence is shown that there cannot be any complete
problem on Aristotelian structures for several complexity classes.  Our approach is extended beyond complete problems.
Using a similar form, a logic is developed to capture the complexity class $\mathrm{NP\cap coNP}$ which very likely
contains no complete problem.

\vspace{\baselineskip} \noindent{\it Keywords}: theory of computation, computational complexity, Turing reduction,
completeness, descriptive complexity
\end{abstract}

\section{Introduction}

Since 1974, descriptive complexity characterizes computational complexity in terms of logical languages.
Fagin~\cite{Fagin} first shown that the complexity class $\mathrm{NP}$ coincides with the set of problems expressible
in  second order existential ($\mathrm{SO}\exists$) logic. Stockmeyer~\cite{Stockmeyer} extended Fagin's result to the
polynomial-time hierarchy ($\mathrm{PH}$) characterized by second order logic. Further researches revealed logical
characterizations for various complexity classes~\cite{Immerman,Richerby}. Immerman presented a lot of results on
descriptive complexity in his diagram~\cite{Immerman} whose fragment is shown in Figure~\ref{pic1}.

\begin{figure}[htbp]
\centering
\ifpdf
\includegraphics[width=\textwidth]{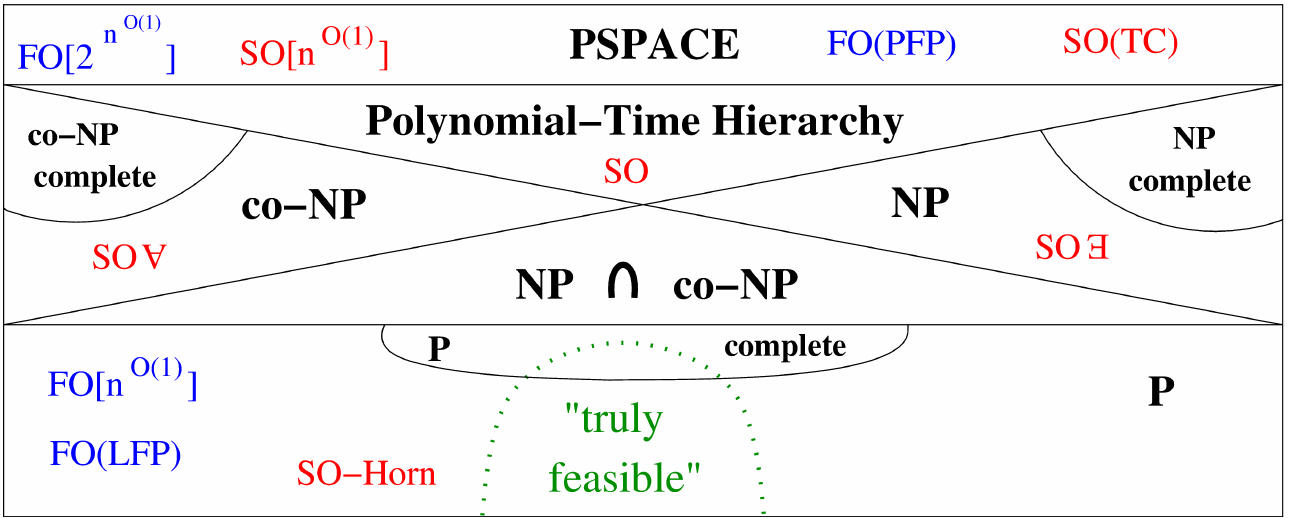}
\else
\includegraphics[width=\textwidth]{pic1.eps}
\fi
\caption[]{The World of Descriptive and Computational Complexity from $\mathrm{P}$ to \mbox{$\mathrm{PSPACE}$}.}
\label{pic1}
\end{figure}

Medina and Immerman~\cite{Medina} addressed the following question. What is it about a $\mathrm{SO}\exists$-sentence
that makes the expressed property $\mathrm{NP}$-comp\-le\-te? They answered this question in part. Namely, a necessary
and sufficient syntactic condition is provided for a \mbox{$\mathrm{SO}\exists$-}se\-n\-ten\-ce to represent
a $\mathrm{NP}$-comp\-le\-te problem on ordered
structures. In addition, the problem is $\mathrm{NP}$-comp\-le\-te via first-order projections (fops). The condition is expressed
in some canonical form based on the $\mathrm{CLIQUE}$ problem. This canonical form provides a syntactic tool for
showing \mbox{$\mathrm{NP}$-}co\-m\-p\-le\-te\-ness: if a \mbox{$\mathrm{SO}\exists$-}se\-n\-ten\-ce is of the form,
then the problem defined by this \mbox{$\mathrm{SO}\exists$-}se\-n\-ten\-ce proves to be
\mbox{$\mathrm{NP}$-}comp\-le\-te via fops.

Bonet and Borges~\cite{Bonet} generalized Medina--Immerman's canonical form in two directions. First, Bonet--Borges's
canonical form works for various complexity classes on ordered structures, and not just $\mathrm{NP}$. Second, the form
can be given in terms of any complete problem in the class, and not just $\mathrm{CLIQUE}$ in case of $\mathrm{NP}$.

However, all the above mentioned forms have a restricted application. No fops are known for the vast majority of
complete problems. For example, in the complexity class $\mathrm{NP}$, only up to fifty
\mbox{$\mathrm{NP}$-}comp\-le\-te problems are known to be complete via fops~\cite{Bonet,Medina}. Moreover, only five
numerical \mbox{$\mathrm{NP}$-}comp\-le\-te problems among thousands of such ones are known to be complete via
fops~\cite{Bonet}. Besides, these forms define complete problems merely on ordered structures, and no canonical form
was known till now to define complete problems on unordered structures.

As shown in Figure~\ref{pic1}, there are complexity classes such as $\mathrm{NP}$-comp\-le\-te problems, $\mathrm{coNP}$-complete problems,
$\mathrm{P}$-complete problems, and $\mathrm{NP\cap coNP}$ for which no logics were known till now. In addition, for the
complexity classes of \mbox{$\mathrm{PSPACE}$-}comp\-le\-te problems and \mbox{$\mathrm{NL}$-}comp\-le\-te problems which are not depicted in
Immerman's diagram, no logics were known as well. The purpose of our research is to develop logics for these classes.

We will consider problems complete via Turing reductions, and not just fops. A new canonical form will be presented to
easily define any complete problem on ordered structures in the following complexity classes: $\mathrm{NL}$,
$\mathrm{P}$, $\mathrm{coNP}$, $\mathrm{NP}$, and $\mathrm{PSPACE}$ (no matter whether it is complete via fops or not).
In that way, we will completely answer the above mentioned question of Medina and Immerman. Moreover, using the form,
logics  will be developed for complete problems on ordered structures in these complexity classes.

We will show an evidence that there cannot be any complete problem on Aristotelian structures in $\mathrm{P}$,
$\mathrm{coNP}$, $\mathrm{NP}$, and $\mathrm{PSPACE}$.

A new canonical form will be presented to define any complete problem on unordered non-Aristotelian structures in the
following complexity classes: $\mathrm{coNP}$, $\mathrm{NP}$, and $\mathrm{PSPACE}$. Using this form, logics will be
developed for complete problems on unordered non-Aristotelian structures in these complexity classes.

Furthermore, we will extend our approach beyond complete problems. By means of a very similar canonic form for defining
problems in the complexity class $\mathrm{NP\cap coNP}$, a logic will be developed to capture $\mathrm{NP\cap coNP}$
which very likely contains no complete problem.

\section{Preliminaries}

We specify some of the notations and conventions used throughout this paper. We denote by $\log x$ the logarithm of $x$
to the base 2. For a real number $x$, we denote
by $\lfloor x\rfloor$ the greatest $n \in \mathbb{Z}$ subject to $n \le x$. We denote by $\widetilde{x}$ the
prefix-free encoding~\cite{Grunwald} of a nonnegative integer~$x$. The length of a string $w$ is denoted by $|w|$. For
strings $w_1$ and $w_2$, we denote by $w_1w_2$ their concatenation. For a string $w$ and a nonnegative integer $x$, we
denote by $w^x$ the concatenation~$\mathop{\underbrace{w\ldots w}}\limits_{x\ \textrm{times}}$ (if $x=0$, then this
concatenation is interpreted as the empty string). By $w[i]$ we denote the $i$-th bit of a nonempty binary string $w$
(for definiteness, we will assume that numbering the bits of any nonempty binary string starts with 1 from left to
right). For a natural number~$x$, we denote by $\ell(x)$ the length of $x$ represented in binary, i.e.
$\ell(x)=\lfloor\log x\rfloor + 1$. By $\ell^{(k)}$ we mean the function
composition~$\mathop{\underbrace{\ell\circ\cdots\circ\ell}}\limits_{k\ \textrm{times}}$~.

We use notations and definitions of finite model theory as stated in \cite{Gradel,Immerman}. For convenience and
without loss of generality, we will consider vocabularies without constant symbols and without function symbols. So, a
vocabulary is a finite set $\tau = \{R^{a_1}_1,\ldots,R^{a_m}_m\}$ of relation symbols of specified arities. A
\mbox{$\tau$-}struc\-tu\-re is a tuple ${A} = (|A|,R^A_1, \ldots , R^A_m)$, where $|A|$ is a nonempty set, called the
universe of $A$, and each $R^{A}_i$ is a relation on $A$ such that $\mathrm{arity}(R^{A}_i) = a_i$, $1 \le i \le m$.  A
$\tau$-struc\-tu\-re is ordered if one of the symbols of its vocabulary $\tau$ is $<$, interpreted as a linear order on
the universe.

A finite $\tau$-struc\-tu\-re is a $\tau$-struc\-tu\-re ${A}$ whose universe $|A|$ is a finite set.
We will assume that the universe of every finite $\tau$-struc\-tu\-re is an initial segment $\{0,\ldots,n-1\}$ of
nonnegative integers, where $n>1$. The notation $\|A\|$ denotes the cardinality of the universe of $A$. Only finite
structures will be considered throughout this paper. Therefore, we will omit the word `finite', when referring to the
structures in what follows.

Let us suppose that $A = (|A|,R^{A}_1 , \ldots,R^{A}_m)$ and $B = (|B|,R^{B}_1 , \ldots , R^{B}_m)$ are two
$\tau$-struc\-tu\-res. An isomorphism between $A$ and $B$ is a mapping $h : |A| \rightarrow |B|$ that satisfies the
following two conditions:

\begin{enumerate}[1)]
  \item $h$ is a bijection.
  \item For every relation symbol $R_i$,\ $1 \le i \le m$, of arity $t$ and for every \mbox{$t$-}tu\-p\-le $(a_1, \ldots ,
      a_t)$ of elements in $|A|$, we have $R^{A}_i (a_1, \ldots , a_t)$ if and only if $R^{B}_i (h(a_1), \ldots ,
      h(a_t))$.
\end{enumerate}

By a model class we mean a set $K$ of $\tau$-struc\-tu\-res of a fixed vocabulary $\tau$ that is closed under
isomorphism, i.e. if $A \in K$ and $A$ is isomorphic to $B$, then $B \in K$ as well. For a vocabulary $\tau$, we by
$\mathrm{STRUC}[\tau]$ denote the model class of all $\tau$-struc\-tu\-res.

Let $\mathcal{L}$ denote a logic. For every vocabulary $\tau$, the language $\mathcal{L}(\tau)$ is the recursive set of
all well-formed sentences (whose elements are called $\mathcal{L}(\tau)$-sen\-ten\-ces) with the symbols of $\tau$ and
with the symbols predefined for the logic $\mathcal{L}$. For example, for first order ($\mathrm{FO}$) logic,
$\mathrm{FO}(\tau)$ is the set of all first order sentences with the symbols of $\tau$, numerical relational symbols:
$=,\ne$, logical connectives: $\wedge, \vee, \neg$, variables: $x,y,z,\ldots$ with or without subscripts, quantifiers:
$\exists, \forall$, and brackets: $(,),[,]$. All the predefined symbols have the standard interpretations. Also, for
$\mathrm{SO}\exists$
logic, $\mathrm{SO}\exists(\tau)$ denotes the set of all second order sentences of the form
$\exists Q_1\ldots\exists Q_l\varphi$, where $\varphi$ is a
{$\mathrm{FO}(\tau\cup\{Q^{a_1}_1,\ldots,Q^{a_l}_l\})$-sen}\-ten\-ce.

In addition, $\models$ is a binary relation between $\mathcal{L}(\tau)$-sen\-ten\-ces and $\tau$-struc\-tu\-res, so
that for each $\mathcal{L}(\tau)$-sen\-ten\-ce $\Gamma$, the set $\{A \in \mathrm{STRUC}[\tau] \mid A \models \Gamma\}$
denoted by $\mathrm{MOD}[\Gamma]$ is closed under isomorphism. Also, we say that a $\mathcal{L}(\tau)$-sen\-ten\-ce
$\Gamma$ defines a model class $K$ if $K = \mathrm{MOD}[\Gamma]$. We call a $\mathcal{L}(\tau)$-sen\-ten\-ce $\Gamma$
logically valid if $\mathrm{STRUC}[\tau]=\mathrm{MOD}[\Gamma]$. We say that two $\mathcal{L}(\tau)$-sen\-ten\-ces
$\Lambda$ and $\Gamma$ are logically equivalent if $\mathrm{MOD}[\Lambda]=\mathrm{MOD}[\Gamma]$.

By $\Gamma(\psi)$ we denote a sentence $\Gamma$ such that $\psi$ occurs in $\Gamma$ as a subformula.
Then, $\Gamma[\psi/\xi]$ denotes the sentence
obtained by replacing each occurrence of $\psi$ by $\xi$ in $\Gamma$.

In order to measure the computational complexity of problems on struc\-tu\-res, we need to represent struc\-tu\-res by
binary strings to be used as inputs for Turing machines. Moreover, simple encodings can be used to represent any
arbitrary objects (integers, sen\-ten\-ces, Turing machines, etc.) as binary strings.

For an object $Z$, we will use $\langle Z\rangle$ to denote some binary representation of~$Z$. For example, if $Z$ is a
natural number, then $\langle Z\rangle$ denotes its usual binary representation. If $Z$ is a sen\-ten\-ce, then
$\langle Z\rangle$ denotes the binary representation of G\"odel number of $Z$. If $Z$ is a Turing machine, then
$\langle Z\rangle$ also denotes the binary representation of G\"odel number of $Z$. If $Z$ is a $\tau$-struc\-tu\-re,
then $\langle Z\rangle$ denotes the binary encoding $\mathrm{bin}(Z)$ as stated in \cite{Immerman}.

We will characterize a model class of $\mathrm{STRUC}[\tau]$ for some fixed vocabulary $\tau$
as a complexity theoretic problem.
Let $\mathcal{L}$ be a logic, $\mathcal{N}$ a
complexity class, and $\tau$ a vocabulary. We say that $\mathcal{L}$ captures $\mathcal{N}$ on $\mathrm{STRUC}[\tau]$
if the following two conditions are satisfied:

\begin{enumerate}[1)]
  \item For every $\mathcal{L}(\tau)$-sen\-ten\-ce $\Gamma$, the model class $\mathrm{MOD}[\Gamma]$
     belongs to $\mathcal{N}$.
  \item For every model class $K \subseteq \mathrm{STRUC}[\tau]$ in $\mathcal{N}$, %
   there exists a $\mathcal{L}(\tau)$-sen\-ten\-ce $\Gamma$ that defines $K$.
\end{enumerate}
Also, we say that $\mathcal{L}$ captures $\mathcal{N}$ on all (ordered) structures if $\mathcal{L}$ captures
$\mathcal{N}$ on $\mathrm{STRUC}[\tau]$, for every vocabulary $\tau$\ (containing $<$).

We need the following definition of oracle Turing machines adapted for struc\-tu\-res \cite{Immerman,Ladner}. An oracle
Turing machine is a Turing machine with a two-way read only input tape, a two-way read-write storage tape, and a
one-way write only oracle tape. Oracle Turing machines have special states: $ACC$, $QUE$, $YES$, and $NO$. The state
$ACC$ is the accepting state. That is, a \mbox{$\tau$-}struc\-tu\-re $A$ is accepted by a Turing machine $T$ if and
only if $T$ on the input $\langle A\rangle$ halts in state $ACC$. The state $QUE$ is the query state. In each state
except $QUE$ the machine may write a symbol onto the oracle tape. In state $QUE$ the machine goes into state $YES$ if
the string written on the oracle tape is a member of the oracle set, otherwise it enters state $NO$. In moving from
state $QUE$ to $YES$ or $NO$ no other action is taken except to erase the oracle tape. By $T^B$ we denote the oracle
Turing machine $T$ equipped with an oracle for the set $B$ of some \mbox{$\tau$-}struc\-tu\-res. For example, the
string $\langle A\rangle$ encoding a \mbox{$\tau$-}struc\-tu\-re $A$ is written on the oracle tape when $T^B$ enters
the query state $QUE$. At the next step, $T^B$ goes into either $YES$ if $A \in B$, or $NO$ otherwise; the oracle tape
is (magically) erased.

We also need the following definitions of Turing machines with bounded resources.

A Turing machine $T$ is polynomial time clocked if the code of $T$ contains a natural number $c$ such that $(n+2)^c$ is
an upper bound for the running time of $T$ on inputs of length $n$.

A Turing machine $T$ is logarithmic space (logspace) bounded if the code of $T$ contains a natural number $c$ such that
$c \log (n+2)$ is an upper bound for the memory space of $T$'s storage tape on inputs of length $n$.

Let we by $\mathcal{N}$ mean some complexity class. For short, we will call an oracle Turing machine
$\mathcal{N}$-spe\-ci\-fic if it is of the same type as the machines that usually realize Turing reductions of problems
in $\mathcal{N}$. So, in case of $\mathcal{N} \in \{\mathrm{coNP}, \mathrm{NP}, \mathrm{PSPACE}\}$, by
$\mathcal{N}$-spe\-ci\-fic Turing machines we mean polynomial time clocked oracle Turing machines. In case of
$\mathcal{N} \in \{\mathrm{NL}, \mathrm{P}\}$, by $\mathcal{N}$-spe\-ci\-fic Turing machines we mean logspace bounded
oracle Turing machines.

Let us recall the following definitions of Turing reducibility and completeness. Given a complexity class
$\mathcal{N}$, for two sets $A$ and $B$ in $\mathcal{N}$, we say that $A$ is Turing reducible to $B$ if there exists a
$\mathcal{N}$-spe\-ci\-fic Turing machine $T^B$ to recognize the set $A$. A set $A \in \mathcal{N}$ is called
$\mathcal{N}$-complete (via Turing reductions) if $X$ is Turing reducible to $A$ for all $X \in \mathcal{N}$.

\section{Aristotelian structures and completeness}

In this section, let $\mathcal{N}$ denote one of the following complexity classes: $\mathrm{P}$, $\mathrm{coNP}$,
$\mathrm{NP}$, or $\mathrm{PSPACE}$. We will present a vocabulary $\tau$ such that the existence of a
\hbox{$\mathcal{N}$-}comp\-le\-te problem defined on the $\tau$-structures is unlikely. Let $\tau=\{R^1_1,\ldots,R^1_m\}$ be a fixed
vocabulary that contains only the symbols of monadic predicates. We call such a vocabulary
Aristotelian~\cite{Lukasiewicz}. Let $A=(|A|,R^A_1,\ldots,R^A_m)$ be an arbitrary \mbox{$\tau$-}struc\-tu\-re. We also
call such a \mbox{$\tau$-}struc\-tu\-re Aristotelian.

Let us consider in detail the binary encoding $\langle A\rangle$ of an Aristotelian \mbox{$\tau$-}struc\-tu\-re $A$. This encoding
$\langle A\rangle$ represents the concatenation of some binary strings: $w^A_1\ldots w^A_m$, where $n=\|A\|$;
$w^A_j\in\{0,1\}^{n}$, $1\le j\le m$. Each string $w^A_j$, $1\le j\le m$, is defined as follows: either $w^A_j[i+1]=1$
if $R^A_j(i)$, or $w^A_j[i+1]=0$ otherwise, for every $0\le i\le n-1$.

Let us show that any Aristotelian $\tau$-structure $A$ can be encoded in a more condensed form as compared with the
encoding $\langle A \rangle$. We write in lexicographic order the sequence $v_1,\ldots,v_{2^{m}}$ of all binary strings
in $\{0,1\}^{m}$. Let us introduce the following characteristic function ${\chi^A:
\{0,\ldots,n-1\}\times\{1,\ldots,2^{m}\}\to \{0,1\}}$ for the $\tau$-structure $A$:
$$ \chi^A(i,j) \triangleq \begin{cases}
1 & \mbox{if } v_j=w^A_1[i+1]w^A_2[i+1]\ldots w^A_m[i+1]\ ;\\
0 & \mbox{otherwise}.\\
\end{cases}
$$
Then, we define a mapping $benc: \mathrm{STRUC}[\tau] \to \{0,1\}^{*}$ as follows:
$$benc(A)\triangleq 1v_1\widetilde{n}_1v_2\widetilde{n}_2\ldots{v_{2^{m}}}\widetilde{n}_{2^{m}}$$
where each $n_j$, $1\le j\le 2^{m}$, is equal to $\sum\limits_{i=0}^{n-1}\chi^A(i,j)$.

Let us estimate the length of $benc(A)$. Note that $|benc(A)|\le 1 + 2^{m}(m+\log n + 2\log\log n + 7)$, where $n=\|A\|$.
Therefore, the length of $benc(A)$ grows logarithmically in $n$ since $m$ is a constant, while the length of $\langle
A\rangle$ grows polynomially in $n$. That is, $benc(A)$ encodes $A$ in a more condensed form as compared with $\langle
A\rangle$.

Note that any Aristotelian $\tau$-structure $A$ can be also encoded as a unary string. We denote by $uenc(A)$ the unary
string  $1^{benc(A)}$, where $benc(A)$ is interpreted here as a number (represented in binary). Then, $uenc(A)$ can be
just considered as the unary encoding of $A$. Since $|uenc(A)|\le2^{|benc(A)|}-1$, we have $|uenc(A)| <
2^{(m+7)2^{m}+1} \cdot(n\log^2n)^{2^{m}}$, where $n=\|A\|$. That is, the length of $uenc(A)$ grows polynomially in $n$
since $m$ is a constant.

We can unambiguously recover $benc(A)$ from $uenc(A)$ as $benc(A)=\langle|uenc(A)|\rangle$ since $benc(A)$ always
starts with 1. However, we cannot unambiguously recover $A$ from $benc(A)$ or $uenc(A)$. Nevertheless, the function
$uenc$ defines the isomorphism property between Aristotelian \mbox{$\tau$-}struc\-tu\-res, which is shown by the
following lemma.

\begin{lemma}
\label{Lemma1} Any two Aristotelian \mbox{$\tau$-}struc\-tu\-res $A$ and $B$ are isomorphic if and only if
$uenc(A)=uenc(B)$. \end{lemma}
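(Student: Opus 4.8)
The plan is to prove both directions of the biconditional. The key observation is that $uenc(A)$ and $benc(A)$ carry exactly the same information (since $benc(A)=\langle|uenc(A)|\rangle$), so it suffices to show that $benc(A)=benc(B)$ iff $A$ and $B$ are isomorphic. And since $benc$ is built deterministically from the multiset of "column types" (the tuples $w^A_1[i+1]\ldots w^A_m[i+1]$ over all $i$), the real content is: two Aristotelian structures are isomorphic iff they have the same multiset of column types, equivalently the same counts $n_j$ for each binary string $v_j\in\{0,1\}^m$.

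Let me think about what an isomorphism between Aristotelian structures actually is. Since the vocabulary is purely monadic, an isomorphism $h:|A|\to|B|$ is a bijection such that for each predicate $R_j$ and each element $i$, $R^A_j(i)$ iff $R^B_j(h(i))$. So $h$ must map each element $i$ to an element $h(i)$ with the identical "signature" — the same membership pattern across all $m$ predicates. The signature of element $i$ in $A$ is precisely the tuple $w^A_1[i+1]\ldots w^A_m[i+1]$, which equals some $v_j$. So an isomorphism exists iff for each signature $v_j$, the number of elements with that signature is the same in $A$ and in $B$ — i.e., $n_j^A=n_j^B$ for all $j$.

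**First I would** make precise the definition of $n_j$ as the count of elements with signature $v_j$: indeed $n_j=\sum_{i=0}^{n-1}\chi^A(i,j)$ counts exactly those $i$ whose signature tuple equals $v_j$, and $\sum_j n_j=n=\|A\|$.

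**For the forward direction**, suppose $A$ and $B$ are isomorphic via $h$. Since $h$ preserves each monadic predicate, it maps every element of signature $v_j$ in $A$ to an element of signature $v_j$ in $B$; being a bijection, it restricts to a bijection between the signature-$v_j$ classes, so $n_j^A=n_j^B$ for every $j$. Hence the strings $benc(A)$ and $benc(B)$ are literally the same concatenation, and therefore $uenc(A)=uenc(B)$.

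**For the converse**, suppose $uenc(A)=uenc(B)$. Then $benc(A)=benc(B)$, and since $benc$ lists the blocks $v_j\widetilde{n}_j$ in a fixed order with a prefix-free encoding $\widetilde{n}_j$, we can unambiguously read off each count, giving $n_j^A=n_j^B$ for all $j$ (here I would note that the leading $1$ and the prefix-free property of $\widetilde{\cdot}$ guarantee the decomposition is unique). Equal counts mean that for each signature $v_j$ the two structures have equally many elements, so I can build a bijection $h$ by matching, for each $j$, the signature-$v_j$ elements of $A$ with those of $B$ arbitrarily. By construction $h$ preserves all the monadic predicates, so it is an isomorphism.

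**The main obstacle** — and it is minor — is the unique-decodability step in the converse: one must argue that $benc(A)=benc(B)$ forces the block-by-block equality $n_j^A=n_j^B$ rather than merely equality of the whole string with some different parsing. This is exactly what the prefix-free encoding $\widetilde{\cdot}$ and the fixed lexicographic ordering of the $v_j$ are for, so the argument is routine but should be stated. Everything else is bookkeeping: the crux is simply that monadic vocabularies can only distinguish elements by their predicate-membership signature, so isomorphism type is completely determined by the vector of signature-counts, which is precisely what $benc$ records.
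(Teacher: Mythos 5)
Your proof is correct and follows essentially the same route as the paper's: both directions reduce the claim to the fact that, for a purely monadic vocabulary, the isomorphism type of a structure is determined by the counts $n_j$ of elements with each signature $v_j$, and the converse direction builds the isomorphism by matching signature classes of equal size (the paper just fixes a canonical matching via explicit linear orders $\prec_A$, $\prec_B$ where you match arbitrarily). Your added remark on unique decodability of $benc$ via the prefix-free encoding is a small point the paper leaves implicit, but it does not change the argument.
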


\begin{proof} Let $h:|A|\to|B|$ be an isomorphism between the \mbox{$\tau$-}struc\-tu\-res $A$ and $B$. Note that the
statement $\chi^A(a,j)=\chi^B(h(a),j)$ holds for any $a\in|A|$ and $1\le j\le2^{m}$. It follows that $benc(A)=benc(B)$
and, therefore, $uenc(A)=uenc(B)$.

Suppose that $uenc(A)=uenc(B)$. Then, we have $benc(A)=benc(B)$. We will show that the equality $benc(A)=benc(B)$
implies an isomorphism between $A$ and $B$. We denote by $\prec_A$ (resp. $\prec_B$) the following linear order on
$|A|$ (resp. $|B|$). For any $x_1,x_2\in |A|$ (resp. $y_1,y_2\in |B|$), we say that $x_1 \prec_A x_2$ (resp. $y_1
\prec_B y_2$) if there exist natural numbers $j_1$ and $j_2$ such that one of the two conditions is satisfied:
\begin{enumerate}[1)]
  \item $1\le j_1<j_2\le2^m$ and $\chi^A(x_1,j_1)=\chi^A(x_2,j_2)=1$ (resp. $\chi^B(y_1,j_1)=\chi^B(y_2,j_2)=1$).
  \item $1\le j_1=j_2\le2^m$, $\chi^A(x_1,j_1)=\chi^A(x_2,j_2)=1$ (resp. $\chi^B(y_1,j_1)=\chi^B(y_2,j_2)=1$), and
      $x_1<x_2$ (resp. $y_1<y_2$).
\end{enumerate}

Let us write in orders $\prec_A$ and $\prec_B$ the sequences $a_1\prec_A a_2\prec_A\cdots\prec_A a_n$ and $b_1\prec_B
b_2\prec_B\cdots\prec_B b_n$ of all elements in $|A|$ and $|B|$ respectively, where $n=\|A\|$ (note that $\|A\|=\|B\|$
since $benc(A)=benc(B)$). Using these sequences, we define a function $g:|A|\to|B|$ as follows. For any $a\in|A|$, we
have $g(a)\triangleq b_j$, where $j$ such that $a_j=a$.

Note that the equality $\chi^A(a,j)=\chi^B(g(a),j)$ holds for any $a\in|A|$ and $1\le j\le2^{m}$. In other words, for
every $a\in|A|$, we have $R^{A}_i (a)$ if and only if $R^{B}_i (g(a))$, $1 \le i \le m$. That is, this bijection $g$
represents an isomorphism between $A$ and $B$. This concludes the proof of the lemma.
\end{proof}

Then, the following theorem holds.

\begin{theorem}
\label{Theorem1} Let $\tau=\{R^1_1,\ldots,R^1_m\}$ be a fixed Aristotelian vocabulary. If there exists a
\mbox{$\mathcal{N}$-}comp\-le\-te model class of $\tau$-structures, then there exists a unary
\mbox{$\mathcal{N}$-}comp\-le\-te set. \end{theorem}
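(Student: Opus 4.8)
The plan is to take the given $\mathcal{N}$-complete model class $K$ and push it through the map $uenc$ to obtain the required unary set. Concretely I would set
$$ L \triangleq \{\, uenc(A) : A \in K \,\} \subseteq \{1\}^{*}. $$
The first thing to confirm is that $L$ really is a unary language, i.e. that whether $1^k \in L$ does not depend on which structure produced $1^k$. This is exactly where Lemma~\ref{Lemma1} enters: if $uenc(A)=uenc(B)$ then $A$ and $B$ are isomorphic, and since $K$ is closed under isomorphism, either both or neither of them lies in $K$. Hence for every $\tau$-structure $A$ one gets the clean equivalence $A \in K$ if and only if $uenc(A) \in L$, and this single equivalence will drive both halves of the argument.

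Next I would verify that $L \in \mathcal{N}$ by exhibiting a polynomial-time many-one reduction from $L$ to $K$; since each of $\mathrm{P},\mathrm{coNP},\mathrm{NP},\mathrm{PSPACE}$ is closed under such reductions, this is enough. On input $1^k$ the reduction recovers $benc(A)=\langle k\rangle$ (the binary numeral of the input length, which is well defined because $benc$ always begins with $1$), parses it against the fixed format $1 v_1 \widetilde{n}_1 \cdots v_{2^m}\widetilde{n}_{2^m}$, and either rejects a malformed parse (mapping to a fixed non-member of $K$) or reconstructs the standard encoding $\langle A\rangle$ of the canonical representative with universe $\{0,\dots,n-1\}$, where $n=\sum_j n_j$. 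The one quantitative point is that $n$ is polynomially bounded in $k$: the prefix-free codes $\widetilde{n}_j$ sit inside the length-$O(\log k)$ string $benc(A)$, so each $\log n_j = O(\log k)$ and therefore $n \le 2^m\cdot\mathrm{poly}(k)=\mathrm{poly}(k)$, making $\langle A\rangle$ of polynomial length and polynomial-time constructible. Composing this deterministic preprocessing with the $\mathcal{N}$-machine for $K$ yields $L\in\mathcal{N}$.

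For hardness I would first reduce $K$ itself to $L$ and then compose with the completeness of $K$. The map $\langle A\rangle \mapsto uenc(A)$ is a many-one reduction from $K$ to $L$ by the equivalence above, and it is computable by a polynomial-time clocked machine: from $\langle A\rangle$ one reads the monadic predicates, counts the $n_j$, forms $benc(A)$, and writes $1^{benc(A)}$, whose length is polynomial in $\|A\|$ by the estimate established just before the lemma, hence polynomial in $|\langle A\rangle|$. Now given any $X\in\mathcal{N}$, the $\mathcal{N}$-completeness of $K$ supplies an $\mathcal{N}$-specific oracle machine $T^K$ recognizing $X$; I would simulate $T$ and replace each oracle query, in which a string $s$ sits on the oracle tape when $T$ enters $QUE$, by computing $uenc$ of the structure encoded by $s$ (and a fixed non-member for malformed $s$), querying $L$ instead, and routing the $YES/NO$ answer back. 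Since the rewriting is polynomial-time and both the number of queries and each query length stay polynomial, the composite machine is again polynomial-time clocked, i.e. $\mathcal{N}$-specific, so $X$ is Turing-reducible to $L$. As $X$ was arbitrary and $L\in\mathcal{N}$, the set $L$ is unary and $\mathcal{N}$-complete.

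I expect the main obstacle to be the bookkeeping of resource bounds rather than the logical content: correctness of every reduction is immediate from Lemma~\ref{Lemma1} and closure under isomorphism, but one must ensure that (i) a unary input of length $k$ corresponds only to structures of size polynomial in $k$, so the membership reduction runs in polynomial time, and (ii) replacing $K$-queries by $L$-queries keeps the oracle machine within the polynomial-time clocked regime even though $uenc$ inflates each query length polynomially. Both follow from the length inequalities already recorded for $benc$ and $uenc$, so no genuinely new estimate should be needed.
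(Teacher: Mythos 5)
Your proposal is essentially the paper's own proof: the same unary set $\{\,uenc(A) : A \in K\,\}$, the same appeal to Lemma~\ref{Lemma1} together with closure under isomorphism to make membership well defined, membership in $\mathcal{N}$ via decoding the unary input and reconstructing a canonical representative structure, and hardness by rewriting oracle queries through $uenc$ and composing with the completeness of $K$. The one genuine caveat is that for $\mathcal{N}=\mathrm{P}$ an $\mathcal{N}$-specific machine must be \emph{logspace bounded} rather than polynomial-time clocked, so your concluding step ``the composite machine is again polynomial-time clocked, i.e.\ $\mathcal{N}$-specific'' does not cover that case as written; the paper closes this by noting that $|benc(A)|$ grows only logarithmically in $\|A\|$, so the query rewriting fits in logarithmic workspace.
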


\begin{proof} Let $L$ be  some $\mathcal{N}$-complete model class of
Aristotelian $\tau$-struc\-tu\-res. Let us consider the set $M=\{u\in1^{*}\mid (\exists A\in
\mathrm{STRUC}[\tau])[u=uenc(A) \wedge A \in L]\}$.

First, we will reduce $L$ to $M$ via a $\mathcal{N}$-specific Turing machine $T^M$. Given a $\tau$-struc\-tu\-re $A$,
this machine $T^M$ transforms the input $\langle A\rangle$ into the binary string $benc(A)$ on the storage tape. Then,
$T^M$ rewrites $benc(A)$ in the unary encoding $uenc(A)$ onto the oracle tape and enters state $QUE$. Next, if
$T^M$ goes into state $YES$ (i.e. $uenc(A)\in M$), then $T^M$ immediately moves to state $ACC$ (i.e. $T^M$ accepts
$A$). Otherwise, $T^M$ rejects $A$. Note that this reduction requires at most logarithmic space since $|benc(A)|$ grows
logarithmically in $\|A\|$. Therefore, $T^M$ is indeed $\mathcal{N}$-specific.

Second, we will show that $M$ belongs to the complexity class $\mathcal{N}$. Let an arbitrary string $u\in1^*$ be
given. Let us decide whether $u\in M$ or not. We transform $u$ into the binary string $w=\langle|u|\rangle$. Then, we
verify in polynomial time whether $w$ is of the form
$1v_1\widetilde{n}_1v_2\widetilde{n}_2\ldots{v_{2^{m}}}\widetilde{n}_{2^{m}}$ or not, where
$\sum\limits_{l=1}^{2^{m}}n_l>1$. If this is not the case, then  $u\not\in M$.
Otherwise, we construct a $\tau$-struc\-tu\-re $A=(|A|,R^{A}_1,\ldots,R^{A}_m)$ in the binary encoding $\langle
A\rangle$ by means of the following algorithm:

\begin{enumerate}[1)]
  \item\label{Step1} Set $i:=1$, $j:=1$, $k:=1$, $n:=\sum\limits_{l=1}^{2^{m}}n_l$.
  \item\label{Step2} Set the binary strings: $w^A_1:=0^n,\ldots,w^A_m:=0^n$.
  \item\label{Step3} If $j>2^{m}$, then stop (the \mbox{$\tau$-}struc\-tu\-re $A$ is formed in the binary encoding
      $\langle A\rangle=w^A_1\ldots w^A_m$). Otherwise, go to step~\ref{Step4}.
  \item\label{Step4} If $n_j=0$, then set $j:=j+1$ and go to step \ref{Step3}. Otherwise, go to step \ref{Step5}.
  \item\label{Step5} For every $1\le r\le m$, set $w^A_r[i]:=1$ if $v_j[r]=1$.
  \item\label{Step6} Set $i:=i+1$ and $k:=k+1$. If $k>n_j$, then set $k:=1$, $j:=j+1$ and go to step \ref{Step3}.
      Otherwise, go to step \ref{Step5}.
\end{enumerate}

Since the set $L$ is closed under isomorphism, the string $u$ belongs to $M$ if and only if this
\mbox{$\tau$-}struc\-tu\-re $A$ is in $L$. It is obvious that constructing the $\tau$-struc\-tu\-re $A$ from the input
$u$ takes polynomial time. Hence, $M$ belongs to $\mathcal{N}$. Thus, the set $M$ is \mbox{$\mathcal{N}$-}comp\-le\-te.
This concludes the proof of the theorem.
\end{proof}

By Theorem~\ref{Theorem1}, it is unlikely that there is a \mbox{$\mathcal{N}$-}comp\-le\-te model class of Aristotelian
structures. For example, the existence of a $\mathrm{NP}$-complete unary language implies $\mathrm{P}=\mathrm{NP}$ (see
\cite{Berman}). Also, the existence of a $\mathrm{P}$-complete sparse language under many-one logspace reduction
implies $\mathrm{L}=\mathrm{P}$ (see \cite{Cai}). Therefore, we will consider \mbox{$\mathcal{N}$-}comp\-le\-te
problems on non-Aristotelian structures in what follows.

\section{Canonical forms for complete problems}

\subsection{Canonical form for complete problems on ordered structures}

We will introduce a canonical form for  $\mathcal{N}$-complete problems on ordered structures. We denote by
$\sigma_{<}$ the vocabulary $\{R^{1},<\}$. We assume that first order logic is extended with the numerical predicate
BIT which is defined in~\cite{Immerman}.  Let $\mathcal{L}_\mathcal{N}$ be a logic capturing $\mathcal{N}$ on
$\mathrm{STRUC}[\sigma_{<}]$, and $\Upsilon(R(x))$ a distinguished \mbox{$\mathcal{L}(\sigma_{<})$-}sen\-ten\-ce
defining some $\mathcal{N}$-complete model class. In this section, we by $\mathcal{N}$ mean one of the following
complexity classes: $\mathrm{NL}$, $\mathrm{P}$, $\mathrm{coNP}$, $\mathrm{NP}$, or $\mathrm{PSPACE}$.
For definiteness, we assume
that $\mathcal{L}_\mathrm{NL}$, $\mathcal{L}_\mathrm{P}$,
$\mathcal{L}_\mathrm{coNP}$, $\mathcal{L}_\mathrm{NP}$, and $\mathcal{L}_\mathrm{PSPACE}$
stand for the following logics:
$\mathrm{FO(TC)}$, $\mathrm{FO(LFP)}$, $\mathrm{SO}\forall$, $\mathrm{SO}\exists$, and $\mathrm{SO(PFP)}$ respectively.

We need to be able to construct sentences defining $\mathcal{N}$-comp\-le\-te problems for various vocabularies
containing $<$, using an operator over $\Upsilon(R(x))$. Let $\tau_{<}=\{ R^{a_1}_1,\ldots,R^{a_m}_m,{<}\}$ be a
fixed arbitrary vocabulary containing $<$. We define the operator $T_{\tau_{<}}$ for mapping $\Upsilon(R(x))$ to a
\mbox{$\mathcal{L}_\mathcal{N}(\tau_{<})$-}sen\-ten\-ce as follows. If $a_1=1$, then
$T_{\tau_{<}}(\Upsilon(R(x)))\triangleq\Upsilon[R(x)/R_1(x)]$. If $a_1>1$, then
$T_{\tau_{<}}(\Upsilon(R(x)))\triangleq\Upsilon[R(x)/\exists y
R_1(\!\!\mathop{\underbrace{y,\ldots,y}}\limits_{a_1-1\ \textrm{times}}\!\!, x)]$, where $y$ is a new variable (we
assume that $y$ does not occur anywhere in $\Upsilon(R(x))$). For short, let  $\Upsilon_{\tau_{<}}$ denote
$T_{\tau_{<}}(\Upsilon(R(x)))$.

Let us show that the \mbox{$\mathcal{L}(\tau_{<})$-}sen\-ten\-ce $\Upsilon_{\tau_{<}}$ defines a
$\mathcal{N}$-comp\-le\-te problem. We will reduce $\mathrm{MOD}[\Upsilon(R(x))]$ to
$\mathrm{MOD}[\Upsilon_{\tau_{<}}]$.
For this purpose, we introduce a \mbox{$\mathcal{N}$-}spe\-ci\-fic Turing machine
$T^{\mathrm{MOD}[\Upsilon_{\tau_{<}}]}$ that recognizes the model class $\mathrm{MOD}[\Upsilon(R(x))]$ as
follows.

Let the binary string $\langle A\rangle$ encoding some \mbox{$\sigma_{<}$-}struc\-tu\-re $A$ be written on the input
tape of $T^{\mathrm{MOD}[\Upsilon_{\tau_{<}}]}$, and $n$ denote $\|A\|$.  At first, the machine
$T^{\mathrm{MOD}[\Upsilon_{\tau_{<}}]}$ computes $n$ as $|\langle A\rangle|$ since the equality $\|A\|=|\langle A\rangle|$
holds in case of the vocabulary $\sigma_{<}$. Then, the machine $T^{\mathrm{MOD}[\Upsilon_{\tau_{<}}]}$ rewrites the
input string $\langle A\rangle$ onto the query tape. Next, the machine $T^{\mathrm{MOD}[\Upsilon_{\tau_{<}}]}$ will
transforms the $\sigma_{<}$-struc\-tu\-re $A$ into a $\tau_{<}$-structure $B$ such that $\|B\|=\|A\|$, either
$R^B_1=\{0\}^{a_1-1}\times R^A$ if $a_1>1$, or $R^B_1=R^A$ if $a_1=1$;
${R^B_2=\varnothing},\ldots,{R^B_m=\varnothing}$. For this purpose, $T^{\mathrm{MOD}[\Upsilon_{\tau_{<}}]}$ appends $N$
zeros to $\langle A\rangle$ on the query tape, obtaining the string $w=\langle A\rangle{0}^{N}$, where $N=n^{a_1} +
n^{a_2} + \ldots + n^{a_m} - n$. Note that $w$ encodes the \mbox{$\tau_{<}$-}struc\-tu\-re $B$ since the initial
substring $\langle A\rangle{0}^{n^{a_1}-n}$ encodes the relation $R^B_1$, and (if any) the other substrings
${0}^{n^{a_2}}$, $\ldots,$ ${0}^{n^{a_m}}$ encode the relations $R^B_2,\ldots,R^B_m$ respectively. Then,
$T^{\mathrm{MOD}[\Upsilon_{\tau_{<}}]}$ enters the query state $QUE$. If $B\models \Upsilon_{\tau_{<}}$, then
$T^{\mathrm{MOD}[\Upsilon_{\tau_{<}}]}$ goes into state $YES$, otherwise it enters state $NO$. At the next step,
$T^{\mathrm{MOD}[\Upsilon_{\tau_{<}}]}$ moves to state $ACC$ from state $YES$ (i.e.
$T^{\mathrm{MOD}[\Upsilon_{\tau_{<}}]}$ accepts $A$). Otherwise, $T^M$ rejects $A$.

It is obvious that $T^{\mathrm{MOD}[\Upsilon_{\tau_{<}}]}$ realizes a logspace reduction from
$\mathrm{MOD}[\Upsilon(R(x))]$ to $\mathrm{MOD}[\Upsilon_{\tau_{<}}]$ since $A$ is a model of
$\Upsilon(R(x))$ if and only if $B$ is a model of $\Upsilon_{\tau_{<}}$. Therefore,
$\mathrm{MOD}[\Upsilon_{\tau_{<}}]$ is $\mathcal{N}$-complete as well as $\mathrm{MOD}[\Upsilon(R(x))]$. Thus,
the \mbox{$\mathcal{L}(\tau_{<})$-}sen\-ten\-ce $\Upsilon_{\tau_{<}}$ indeed defines some $\mathcal{N}$-comp\-le\-te
problem.

With each $\mathcal{L}_\mathcal{N}(\tau_{<})$-sen\-ten\-ce $\Gamma$ and with each $\mathcal{N}$-spe\-ci\-fic Turing
machine $T^\mathrm{MOD[\Gamma]}$, we associate the following set of $\tau_{<}$-struc\-tu\-res:
$$
\begin{array}{rl}
  S^{<}_{\Gamma,T^\mathrm{MOD[\Gamma]}}\triangleq\{\:A\in
\mathrm{STRUC}[\tau_{<}] \mid & \!\!\!\!\! (\forall B \in
\mathrm{STRUC}[\tau_{<}]) \textrm{[}\:\|B\|
> \ell^{(3)}(|\langle A\rangle|)\, \vee\\ & \!\!\!\!\!
    (T^\mathrm{MOD[\Gamma]}\ \textrm{accepts}\ \langle B\rangle \Leftrightarrow B \models \Upsilon_{\tau_{<}})\:\textrm{]}\:\}.
\end{array}
$$

The set $S^{<}_{\Gamma,T^\mathrm{MOD[\Gamma]}}$ has some interesting properties. Note that if the condition
``$T^\mathrm{MOD[\Gamma]}$ accepts $\langle B\rangle$\ $\Leftrightarrow$\ $B \models \Upsilon_{\tau_{<}}$'' is
satisfied for all $B\in \mathrm{STRUC}[\tau_{<}]$, then $T^\mathrm{MOD[\Gamma]}$ is a Turing reduction from
$\mathrm{MOD}[\Upsilon_{\tau_{<}}]$ to $\mathrm{MOD}[\Gamma]$. In this case, $\mathrm{MOD}[\Gamma]$ is a
$\mathcal{N}$-complete model class, and $S^{<}_{\Gamma,T^\mathrm{MOD[\Gamma]}}=\mathrm{STRUC}[\tau_{<}]$. If this is
not the case, then $S^{<}_{\Gamma,T^\mathrm{MOD[\Gamma]}}$ is finite. Let us show how to find a
$\mathrm{FO}(\tau_{<})$-sen\-ten\-ce that defines the model class $S^{<}_{\Gamma,T^\mathrm{MOD[\Gamma]}}$.

The complexity class $\mathrm{DTIME}[\log n]$ consists of all problems decidable in logarithmic time in the input
length $n$. For logarithmic time, an appropriate model of computation is random access machines which can directly
access any memory cell by means of indices. We have the following theorem.

\begin{theorem}
\label{Theorem2} $S^{<}_{\Gamma,T^\mathrm{MOD[\Gamma]}} \in \mathrm{DTIME}[\log n]$. \end{theorem}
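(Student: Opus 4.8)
The plan is to establish a dichotomy for the set $S^{<}_{\Gamma,T^\mathrm{MOD[\Gamma]}}$ and then observe that in either case the resulting model class is decidable within the required time bound. The whole argument turns on the universal quantifier over $B$ in the definition of $S^{<}_{\Gamma,T^\mathrm{MOD[\Gamma]}}$ together with the fact that $\ell^{(3)}$ is a nondecreasing function that tends to infinity.

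First I would analyse the membership condition. Fix $\Gamma$ and $T^\mathrm{MOD[\Gamma]}$, and call a structure $B \in \mathrm{STRUC}[\tau_{<}]$ \emph{bad} if the equivalence ``$T^\mathrm{MOD[\Gamma]}$ accepts $\langle B\rangle \Leftrightarrow B \models \Upsilon_{\tau_{<}}$'' fails. If there is no bad $B$, then the bracketed disjunction holds for every $B$ regardless of $A$, so the universal statement is vacuously true for all $A$ and $S^{<}_{\Gamma,T^\mathrm{MOD[\Gamma]}} = \mathrm{STRUC}[\tau_{<}]$. If some bad $B$ exists, let $\beta$ be the least value of $\|B\|$ over all bad $B$ (this minimum exists and is finite). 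For such a bad witness the disjunct ``$\|B\| > \ell^{(3)}(|\langle A\rangle|)$'' is the only one that can hold, so $A \in S^{<}_{\Gamma,T^\mathrm{MOD[\Gamma]}}$ forces $\ell^{(3)}(|\langle A\rangle|) < \beta$; conversely this single inequality already guarantees $\|B\| \ge \beta > \ell^{(3)}(|\langle A\rangle|)$ for every bad $B$. Hence $S^{<}_{\Gamma,T^\mathrm{MOD[\Gamma]}} = \{A \mid \ell^{(3)}(|\langle A\rangle|) < \beta\}$. Since $\ell^{(3)}$ is nondecreasing and unbounded, this inequality bounds $|\langle A\rangle|$ by a constant, so only finitely many encodings, and thus finitely many structures, qualify; that is, $S^{<}_{\Gamma,T^\mathrm{MOD[\Gamma]}}$ is finite.

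Second I would verify the complexity claim in each case. In the first case the machine accepts every well-formed input, which needs only constant time. In the second case there is a constant upper bound $N_{\max}$ on the length of any encoding in $S^{<}_{\Gamma,T^\mathrm{MOD[\Gamma]}}$; a random access machine probes whether input position $N_{\max}+1$ exists, rejecting at once if the input is longer, and otherwise reads the entire input of bounded length and compares it against the hard-coded finite list of members. Both procedures run in constant, hence logarithmic, time, so $S^{<}_{\Gamma,T^\mathrm{MOD[\Gamma]}} \in \mathrm{DTIME}[\log n]$.

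The constants $\beta$, $N_{\max}$, and the finite list are fixed once $\Gamma$ and $T^\mathrm{MOD[\Gamma]}$ are fixed, so they may legitimately be built into the deciding machine; no runtime computation of $\ell^{(3)}$ or of $\beta$ is required. I expect the only genuinely substantive step to be recognising the dichotomy itself---seeing that the threshold $\ell^{(3)}(|\langle A\rangle|)$ collapses the universally quantified condition to either a vacuous truth or an upper bound on the input length---after which the $\mathrm{DTIME}[\log n]$ membership is routine. The minor care needed is purely about the logarithmic-time model of computation, namely that testing whether the input exceeds a fixed length and comparing against a finite list both fit within constant time on a random access machine.
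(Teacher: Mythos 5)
Your argument is correct for the literal statement, but it takes a genuinely different route from the paper, and the difference matters downstream. You establish the dichotomy (which the paper also states, just before the theorem): either no ``bad'' $B$ exists and $S^{<}_{\Gamma,T^\mathrm{MOD[\Gamma]}}=\mathrm{STRUC}[\tau_{<}]$, or a bad witness of minimal universe size $\beta$ exists, the membership condition collapses to $\ell^{(3)}(|\langle A\rangle|)<\beta$, and the set is finite; in either case a machine with the relevant constants hard-coded decides membership in constant time. This is a clean, elementary proof of the bare complexity claim. The paper instead gives a single \emph{uniform} algorithm: on input $A$ it computes $\ell^{(3)}(\hat n)$ in time $O(\log\hat n)$, enumerates all $B$ with $\|B\|\le\ell^{(3)}(\hat n)$, and checks condition~\eqref{eq1} for each by brute force, using the estimate $2^{p''(\log\log\log\hat n)}=o(\log\hat n)$ to stay within the time bound. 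What the paper's version buys, and yours does not, is effectiveness: your deciding machine depends on knowing which case of the dichotomy holds and on the value of $\beta$, neither of which is computable from the pair $(\Gamma,T^\mathrm{MOD[\Gamma]})$ (deciding whether a bad $B$ exists is exactly deciding whether $T^\mathrm{MOD[\Gamma]}$ is a correct reduction). Corollary~\ref{Corollary3} claims one can \emph{construct} the sentence $\gamma_{\Gamma,T^\mathrm{MOD[\Gamma]}}$ from the pair, and Theorem~\ref{Theorem8} relies on carrying out that construction inside a decision procedure; both need the uniform algorithm, not merely the existence of some $O(\log n)$ decider. So your proof suffices for Theorem~\ref{Theorem2} as stated but would have to be replaced by (or supplemented with) the uniform argument to support what the paper does with it.
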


\begin{proof}
Let us use a random access machine to decide whether $A\in S^{<}_{\Gamma,T^\mathrm{MOD[\Gamma]}}$ or not, given a
\mbox{$\tau_{<}$-}struc\-tu\-re $A$. As stated in~\cite{Immerman}, in complexity theory, $n$ usually denotes the size
of the input. However, in finite model theory, $n$ denotes the cardinality of the universe. In order to clear up
confusion, let $\hat{n}$ denote $|\langle A\rangle|$.

First, we need to compute the number $\ell^{(3)}(\hat{n})$. For this purpose, we compute $\hat{n}$,
using a binary search~\cite{Aho}. Since we use a random access machine, this can be done in time $O(\log\hat{n})$.
Then, we count the number $\ell(\hat{n})$ of bits in $\hat{n}$ represented in binary. Likewise, we count the number
$\ell^{(2)}(\hat{n})$ of bits in $\ell(\hat{n})$ and then the number $\ell^{(3)}(\hat{n})$ of bits in
$\ell^{(2)}(\hat{n})$. It is obvious that the total time necessary for computing $\ell^{(3)}(\hat{n})$ is $O(\log
\hat{n})$.

Second, we enumerate all \mbox{$\tau_{<}$-}struc\-tu\-res $B\in \mathrm{STRUC}[\tau_{<}]$ such that
$\|B\|\le\ell^{(3)}(\hat{n})$. Since $|\langle B\rangle|\le p(\|B\|)$, the value $|\langle B\rangle|$ is bounded above
by $p(\log \log \log  \hat{n})$, where $p$ is a polynomial dependent on the vocabulary $\tau_{<}$.  Therefore, this
enumeration takes time $O(2^{p(\log \log \log  \hat{n})})$. Then, for each \mbox{$\tau_{<}$-}struc\-tu\-re $B$ of the
enumeration, we need to verify the following condition:
\begin{equation}
T^\mathrm{MOD[\Gamma]}\ \textrm{accepts}\ \langle B\rangle\ \Leftrightarrow\ B \models
\Upsilon_{\tau_{<}}.\label{eq1}
\end{equation}

If condition~\eqref{eq1} is not satisfied for some \mbox{$\tau_{<}$-}struc\-tu\-re $B$ of the enumeration, then the
input \mbox{$\tau_{<}$-}struc\-tu\-re $A$ does not belong to $S^{<}_{\Gamma,T^\mathrm{MOD[\Gamma]}}$. Otherwise, $A \in
S^{<}_{\Gamma,T^\mathrm{MOD[\Gamma]}}$. Now, let us estimate the running time for verifying condition~\eqref{eq1}.

We can decide whether ``$B \models \Upsilon_{\tau_{<}}$'' for one \mbox{$\tau_{<}$-}struc\-tu\-re $B$ in time
$O(2^{p_1(\log \log \log \hat{n})})$, where $p_1$ is a polynomial. This follows from the fact that any model-checking
problem in $\mathcal{N}$ is decidable at most in exponential time. In a similar manner, the verification of
``$T^\mathrm{MOD[\Gamma]}$ accepts $\langle B\rangle$'' takes time $O(2^{p_2(\log \log \log \hat{n})})$, where $p_2$ is
a polynomial. This follows from the fact that a simulation of running $T^\mathrm{MOD[\Gamma]}$ together with a
simulation of oracle queries to $\mathrm{MOD}[\Gamma]$ requires at most exponential time as well. Therefore, the
verification of~\eqref{eq1} takes time $O(2^{p_1(\log \log \log \hat{n})}+2^{p_2(\log \log \log \hat{n})})$, or, in
short, $O(2^{p'(\log \log \log \hat{n})})$, where either $p'=p_1$ in case of
$\displaystyle\lim_{\hat{n}\to\infty}\frac{p_2(\hat{n})}{p_1(\hat{n})}<\infty$, or $p'=p_2$ otherwise.

Consequently, the total time to verify~\eqref{eq1} for all \mbox{$\tau_{<}$-}struc\-tu\-res $B$ of the enumeration is
$O(2^{p(\log \log \log \hat{n})}\cdot2^{p'(\log \log \log  \hat{n})})$, or, in short, $O(2^{p''(\log \log \log
\hat{n})})$, where $p''=p+p'$. Note that $\displaystyle\lim_{\hat{n}\to\infty}\frac{2^{p''(\log \log \log
\hat{n})}}{\log  \hat{n}}=0$ for any polynomial $p''$. Then,
this
time can be approximately estimated at
most as $O(\log \hat{n})$.

Thus, the overall running time consists of the time for computing $\ell^{(3)}(\hat{n})$ and of the time for
verifying~\eqref{eq1} on all \mbox{$\tau_{<}$-}struc\-tu\-res $B$ of the enumeration. These times are both estimated as
$O(\log \hat{n})$. Therefore, so is the overall running time. This concludes the proof of the theorem. \end{proof}

\begin{corollary}
\label{Corollary3} Given a $\mathcal{L}_\mathcal{N}(\tau_{<})$-sen\-ten\-ce $\Gamma$ and a $\mathcal{N}$-spe\-ci\-fic
Turing machine $T^\mathrm{MOD[\Gamma]}$, one can construct a $\mathrm{FO}(\tau_{<})$-sen\-ten\-ce that defines
$S^{<}_{\Gamma,T^\mathrm{MOD[\Gamma]}}$.\end{corollary}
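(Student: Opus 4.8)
The plan is to obtain Corollary~\ref{Corollary3} as a direct consequence of Theorem~\ref{Theorem2} together with the standard descriptive-complexity characterization of first-order logic over ordered structures. Theorem~\ref{Theorem2} already shows $S^{<}_{\Gamma,T^\mathrm{MOD[\Gamma]}}\in\mathrm{DTIME}[\log n]$ on a random access machine, so the entire task reduces to passing from a logarithmic-time decision procedure to a $\mathrm{FO}(\tau_{<})$-sentence, and to checking that this passage can be carried out effectively from the data $\Gamma$ and $T^\mathrm{MOD[\Gamma]}$. The fact that the vocabulary $\tau_{<}$ contains $<$ is exactly what allows us to appeal to the ordered-structure characterization.

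First I would invoke the classical result (as presented in~\cite{Immerman}) that on ordered structures, and with the numerical predicate BIT available, first-order logic captures precisely the logarithmic-time hierarchy. Since deterministic logarithmic time on a random access machine lies at the bottom of this hierarchy, we have $\mathrm{DTIME}[\log n]\subseteq\mathrm{FO}$ (with $<$ and BIT) on ordered structures. Here I would be careful to reconcile the two meanings of the input-size parameter: the descriptive-complexity theorem measures time in the encoding length $|\langle A\rangle|$, which in the vocabulary $\tau_{<}$ is polynomially related to the cardinality $\|A\|$, so that a logarithm in one is a logarithm in the other up to a constant factor. Applying this inclusion to the membership established in Theorem~\ref{Theorem2} yields a $\mathrm{FO}(\tau_{<})$-sentence whose models are exactly the structures in $S^{<}_{\Gamma,T^\mathrm{MOD[\Gamma]}}$.

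The point deserving the most care is constructivity, since the corollary asserts that the sentence can be \emph{constructed} from $\Gamma$ and $T^\mathrm{MOD[\Gamma]}$, not merely that it exists. To handle this I would observe that the algorithm built in the proof of Theorem~\ref{Theorem2} is uniform in its two inputs: from $\Gamma$ and $T^\mathrm{MOD[\Gamma]}$ it produces an explicit random access machine that computes the predicate in logarithmic time, with the enumeration of the small structures $B$ satisfying $\|B\|\le\ell^{(3)}(|\langle A\rangle|)$, the model-checking of $\Upsilon_{\tau_{<}}$, and the simulation of $T^\mathrm{MOD[\Gamma]}$ all being computable from the given data. Because the translation from a logarithmic-time machine to an equivalent first-order sentence is itself effective, composing the two procedures gives an algorithm outputting the desired $\mathrm{FO}(\tau_{<})$-sentence.

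I expect the main obstacle to be organisational rather than conceptual: one must pin down the interface between the machine's input conventions and the logical structure, ensuring that the logarithmic-time machine accesses $\langle A\rangle$ in the bit-indexed manner that the $\mathrm{FO}[\mathrm{BIT}]$ characterization presupposes. Once this interface is fixed and the two effective procedures are composed, the corollary follows with no further computation.
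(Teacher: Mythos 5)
Your proposal is correct and follows essentially the same route as the paper: it invokes Theorem~\ref{Theorem2} to place $S^{<}_{\Gamma,T^\mathrm{MOD[\Gamma]}}$ in $\mathrm{DTIME}[\log n]$ on a random access machine and then appeals to the standard translation (from~\cite{Immerman}, with first-order logic extended by BIT over ordered structures) to obtain the $\mathrm{FO}(\tau_{<})$-sentence. Your additional remarks on effectivity and on reconciling $\|A\|$ with $|\langle A\rangle|$ only elaborate details that the paper leaves implicit.
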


\begin{proof} By Theorem~\ref{Theorem2}, we can use a random access machine $M$ to recognize
$S^{<}_{\Gamma,T^\mathrm{MOD[\Gamma]}}$, where $M$ runs in time $O(\log n)$. Then, one can construct the
\mbox{$\mathrm{FO}(\tau_{<})$-}sen\-ten\-ce from $M$ as shown in \cite{Immerman}, provided that first order logic is
extended with the numerical predicate BIT.
\end{proof}

Then, by $\gamma_{\Gamma,T^\mathrm{MOD[\Gamma]}}$ we denote the constructed $\mathrm{FO}(\tau_{<})$-sen\-ten\-ce
defining the model class $S^{<}_{\Gamma,T^\mathrm{MOD[\Gamma]}}$. This $\gamma_{\Gamma,T^\mathrm{MOD[\Gamma]}}$
characterizes the corresponding Turing machine $T^\mathrm{MOD[\Gamma]}$ in two different ways. First,
$T^\mathrm{MOD[\Gamma]}$ is a Turing reduction from $\mathrm{MOD}[\Upsilon_{\tau_{<}}]$ to $\mathrm{MOD}[\Gamma]$ if
and only if $\gamma_{\Gamma,T^\mathrm{MOD[\Gamma]}}$ is logically valid. Second, $T^\mathrm{MOD[\Gamma]}$ does not
realize any Turing reduction from $\mathrm{MOD}[\Upsilon_{\tau_{<}}]$ to $\mathrm{MOD}[\Gamma]$ if and only if
$\mathrm{MOD}[\gamma_{\Gamma,T^\mathrm{MOD[\Gamma]}}]$ is finite. We call the $\mathrm{FO}(\tau_{<})$-sen\-ten\-ce
$\gamma_{\Gamma,T^\mathrm{MOD[\Gamma]}}$ the characteristic sen\-ten\-ce for the pair
$(\Gamma,T^\mathrm{MOD[\Gamma]})$. The following theorem holds.

\begin{theorem}
\label{Theorem4} Let $\mathcal{L}_\mathcal{N}$ be a logic capturing a complexity class $\mathcal{N}$ on
$\mathrm{STRUC}[\tau_{<}]$, and $\Pi\subseteq \mathrm{STRUC}[\tau_{<}]$ a model class. Then, $\Pi$ is
$\mathcal{N}$-complete if and only if there exists a $\mathcal{L}_\mathcal{N}(\tau_{<})$-sen\-ten\-ce $\Lambda$ such
that $\mathrm{MOD}[\Lambda]=\Pi$ and $\Lambda$ is of the form
\begin{equation}
(\gamma_{\Gamma,T^\mathrm{MOD[\Gamma]}} \wedge \Gamma) \vee
(\neg\gamma_{\Gamma,T^\mathrm{MOD[\Gamma]}} \wedge \Upsilon_{\tau_{<}})\label{eq2}
\end{equation}
where $\Gamma$ is a $\mathcal{L}_\mathcal{N}(\tau_{<})$-sen\-ten\-ce; $T^\mathrm{MOD[\Gamma]}$ a
$\mathcal{N}$-spe\-ci\-fic Turing machine; $\gamma_{\Gamma,T^\mathrm{MOD[\Gamma]}}$ the characteristic sen\-ten\-ce for
$(\Gamma,T^\mathrm{MOD[\Gamma]})$.\end{theorem}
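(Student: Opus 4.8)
The plan is to prove both implications of the biconditional, relying throughout on the two characterizing properties of the characteristic sentence that were established just before the statement: $T^\mathrm{MOD[\Gamma]}$ is a Turing reduction from $\mathrm{MOD}[\Upsilon_{\tau_{<}}]$ to $\mathrm{MOD}[\Gamma]$ precisely when $\gamma_{\Gamma,T^\mathrm{MOD[\Gamma]}}$ is logically valid, and otherwise $\mathrm{MOD}[\gamma_{\Gamma,T^\mathrm{MOD[\Gamma]}}]$ is finite. These two facts drive everything, so in each direction I would immediately reduce to a case split on the status of $\gamma_{\Gamma,T^\mathrm{MOD[\Gamma]}}$.

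For the forward direction, suppose $\Pi$ is $\mathcal{N}$-complete. Since $\Pi\in\mathcal{N}$ and $\mathcal{L}_\mathcal{N}$ captures $\mathcal{N}$ on $\mathrm{STRUC}[\tau_{<}]$, I would first choose a $\mathcal{L}_\mathcal{N}(\tau_{<})$-sentence $\Gamma$ with $\mathrm{MOD}[\Gamma]=\Pi$. Because $\mathrm{MOD}[\Upsilon_{\tau_{<}}]$ itself lies in $\mathcal{N}$ (it is $\mathcal{N}$-complete) and $\Pi$ is $\mathcal{N}$-complete, $\mathrm{MOD}[\Upsilon_{\tau_{<}}]$ Turing-reduces to $\Pi=\mathrm{MOD}[\Gamma]$, so there is an $\mathcal{N}$-specific machine $T^\mathrm{MOD[\Gamma]}$ realizing this reduction. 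By the first property, $\gamma_{\Gamma,T^\mathrm{MOD[\Gamma]}}$ is then logically valid, the second disjunct of \eqref{eq2} is never satisfied, and $\mathrm{MOD}[\Lambda]=\mathrm{MOD}[\gamma_{\Gamma,T^\mathrm{MOD[\Gamma]}}\wedge\Gamma]=\mathrm{MOD}[\Gamma]=\Pi$. This exhibits a $\Lambda$ of the required form.

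For the reverse direction, suppose $\mathrm{MOD}[\Lambda]=\Pi$ with $\Lambda$ of the form \eqref{eq2}. Membership $\Pi\in\mathcal{N}$ is immediate: \eqref{eq2} is (equivalent to) a $\mathcal{L}_\mathcal{N}(\tau_{<})$-sentence, since each of the five logics contains $\mathrm{FO}$ and is closed under the Boolean combinations used here — the negated conjunct $\neg\gamma_{\Gamma,T^\mathrm{MOD[\Gamma]}}$ stays first-order, so no second-order negation is introduced — and $\mathcal{L}_\mathcal{N}$ captures $\mathcal{N}$. For hardness I would reduce $\mathrm{MOD}[\Upsilon_{\tau_{<}}]$ to $\Pi$ by the same case split. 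If $\gamma_{\Gamma,T^\mathrm{MOD[\Gamma]}}$ is logically valid, then as above $\Pi=\mathrm{MOD}[\Gamma]$ and the first property hands us the reduction $T^\mathrm{MOD[\Gamma]}$ directly. If it is not valid, then $\mathrm{MOD}[\gamma_{\Gamma,T^\mathrm{MOD[\Gamma]}}]$ is finite, and on every structure outside this finite set $\Lambda$ agrees with $\Upsilon_{\tau_{<}}$; hence $\Pi$ and $\mathrm{MOD}[\Upsilon_{\tau_{<}}]$ have finite symmetric difference, and $\mathrm{MOD}[\Upsilon_{\tau_{<}}]$ still reduces to $\Pi$. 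Either way $\Pi$ is $\mathcal{N}$-hard, and with membership it is $\mathcal{N}$-complete.

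I expect the only delicate point to be the finite case of the reverse direction, where I must present the ``finite patch'' as a genuinely $\mathcal{N}$-specific reduction. Since $\mathrm{MOD}[\gamma_{\Gamma,T^\mathrm{MOD[\Gamma]}}]$ is a fixed finite set of encodings, a reduction machine can hard-wire both membership in it and the finite table of truth values of $\Upsilon_{\tau_{<}}$ on it, issuing a single oracle query to $\Pi$ on every other input. Recognizing this finite set and branching costs only logarithmic space — by Theorem~\ref{Theorem2} even logarithmic time — plus a constant amount of extra work, so the reduction respects the logspace or polynomial-time bound that defines $\mathcal{N}$-specific machines in each of the five classes. Once this is checked, the remaining steps follow routinely from the two properties of the characteristic sentence.
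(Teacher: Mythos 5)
Your proof is correct and follows essentially the same route as the paper's: both directions reduce to the case split on whether $\gamma_{\Gamma,T^\mathrm{MOD[\Gamma]}}$ is logically valid (equivalently, whether $T^\mathrm{MOD[\Gamma]}$ is a genuine Turing reduction), with the valid case giving $\mathrm{MOD}[\Lambda]=\mathrm{MOD}[\Gamma]$ and the non-valid case giving a finite symmetric difference with $\mathrm{MOD}[\Upsilon_{\tau_{<}}]$. Your explicit ``finite patch'' reduction and the remark on membership in $\mathcal{N}$ only spell out details the paper leaves implicit.
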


\begin{proof} For short, we denote the form~\eqref{eq2}
by $\Phi_{\Gamma,T^\mathrm{MOD[\Gamma]}}$.

First, we will prove that $\Phi_{\Gamma,T^\mathrm{MOD[\Gamma]}}$ defines a $\mathcal{N}$-complete problem for any pair
$(\Gamma,T^\mathrm{MOD[\Gamma]})$. Let $T^\mathrm{MOD[\Gamma]}$ be a Turing reduction from
$\mathrm{MOD}[\Upsilon_{\tau_{<}}]$ to $\mathrm{MOD}[\Gamma]$. Then, $\gamma_{\Gamma,T^\mathrm{MOD[\Gamma]}}$ is
logically valid. In this case, the $\mathcal{L}_\mathcal{N}(\tau_{<})$-sen\-ten\-ce
$\Phi_{\Gamma,T^\mathrm{MOD[\Gamma]}}$ is logically equivalent to $\Gamma$, and $\Gamma$ defines a
$\mathcal{N}$-complete problem. Therefore, $\Phi_{\Gamma,T^\mathrm{MOD[\Gamma]}}$ defines the same
$\mathcal{N}$-complete problem as well. Now, let $T^\mathrm{MOD[\Gamma]}$ be not a Turing reduction from
$\mathrm{MOD}[\Upsilon_{\tau_{<}}]$ to $\mathrm{MOD}[\Gamma]$. Then,
$\mathrm{MOD}[\gamma_{\Gamma,T^\mathrm{MOD[\Gamma]}}]$ is finite, and
$\mathrm{MOD}[\neg\gamma_{\Gamma,T^\mathrm{MOD[\Gamma]}}]$ is cofinite. Therefore, the model class
$\mathrm{MOD}[\Phi_{\Gamma,T^\mathrm{MOD[\Gamma]}}]$ differs from the model class $\mathrm{MOD}[\Upsilon_{\tau_{<}}]$
only in a finite set of \mbox{$\tau_{<}$-}struc\-tu\-res, i.e. $\mathrm{MOD}[\Phi_{\Gamma,T^\mathrm{MOD[\Gamma]}}]
\setminus \mathrm{MOD}[\Upsilon_{\tau_{<}}]$ and $\mathrm{MOD}[\Upsilon_{\tau_{<}}] \setminus
\mathrm{MOD}[\Phi_{\Gamma,T^\mathrm{MOD[\Gamma]}}]$ are both finite. Hence,
$\mathrm{MOD}[\Phi_{\Gamma,T^\mathrm{MOD[\Gamma]}}]$ is \mbox{$\mathcal{N}$-}comp\-le\-te. Consequently,
$\Phi_{\Gamma,T^\mathrm{MOD[\Gamma]}}$ defines a $\mathcal{N}$-complete problem in this case as well.

Second, we will prove that any $\mathcal{N}$-complete problem can be represented by means of the form
$\Phi_{\Gamma,T^\mathrm{MOD[\Gamma]}}$. Let $\Gamma$ be a $\mathcal{L}_\mathcal{N}(\tau_{<})$-sen\-ten\-ce defining an
arbitrary $\mathcal{N}$-complete problem, and $T^\mathrm{MOD[\Gamma]}$ a $\mathcal{N}$-spe\-ci\-fic Turing machine
realizing a Turing reduction from $\mathrm{MOD}[\Upsilon_{\tau_{<}}]$ to $\mathrm{MOD}[\Gamma]$. Then,
$\gamma_{\Gamma,T^\mathrm{MOD[\Gamma]}}$ is logically valid. In this case, the
$\mathcal{L}_\mathcal{N}(\tau_{<})$-sen\-ten\-ce $\Gamma$ is logically equivalent to the
\mbox{$\mathcal{L}_\mathcal{N}(\tau_{<})$-}sen\-ten\-ce $\Phi_{\Gamma,T^\mathrm{MOD[\Gamma]}}$. Since $\Gamma$ defines
a \mbox{$\mathcal{N}$-}co\-m\-ple\-te problem, so does $\Phi_{\Gamma,T^\mathrm{MOD[\Gamma]}}$. This concludes the proof
of the theorem. \end{proof}

Thus, the form \eqref{eq2} can serve as a canonical form providing a syntactic tool for showing
\mbox{$\mathcal{N}$-}comp\-le\-te\-ness: if a problem on ordered structures in $\mathcal{N}$ is defined by a
sen\-ten\-ce of the form, then the problem proves to be \mbox{$\mathcal{N}$-}comp\-le\-te via Turing reductions.

\subsection{Canonical form for complete problems on unordered structures}

We will introduce a canonical form for $\mathcal{N}$-comp\-le\-te problems on unordered non-Aristotelian structures. We
denote by $\sigma$ the vocabulary $\{R^{2}\}$. In this section, we by $\mathcal{N}$ mean one of the following
complexity classes: $\mathrm{coNP}$, $\mathrm{NP}$, or $\mathrm{PSPACE}$.
 Let
$\mathcal{L}_\mathcal{N}$ be a logic capturing $\mathcal{N}$ on $\mathrm{STRUC}[\sigma]$. For definiteness, we assume
that $\mathcal{L}_\mathrm{coNP}$, $\mathcal{L}_\mathrm{NP}$, and $\mathcal{L}_\mathrm{PSPACE}$
stand for the following logics:
$\mathrm{SO}\forall$, $\mathrm{SO}\exists$, and $\mathrm{SO(PFP)}$ respectively. Let $\Upsilon(R(x,y))$ be a
distinguished $\mathcal{L}(\sigma)$-sen\-ten\-ce defining some $\mathcal{N}$-complete model class.

We need to be able to construct sentences defining $\mathcal{N}$-comp\-le\-te problems for various non-Aristotelian
vocabularies, using an operator over $\Upsilon(R(x,y))$.  Let $\tau=\{ R^{a_1}_1, \ldots,R^{a_m}_m\}$ be a fixed
arbitrary non-Aristotelian vocabulary. We define the operator $T_{\tau}$ for mapping $\Upsilon(R(x,y))$ to a
$\mathcal{L}_\mathcal{N}(\tau)$-sentence as follows. We find the least number $k$ which is subject to ${1\le k\le m}$
and ${a_k>1}$. If $a_k=2$, then $T_\tau(\Upsilon(R(x,y)))\triangleq\Upsilon[R(x,y)/R_k(x,y)]$. If ${a_k>2}$, then
$T_\tau(\Upsilon(R(x,y)))\triangleq\Upsilon[R(x,y) / \exists z
R_k(x,y,\!\!\mathop{\underbrace{z,\ldots,z}}\limits_{a_k-2\ \textrm{times}}\!\!)]$, where $z$ is a new variable (we
assume that $z$ does not occur anywhere in $\Upsilon(R(x,y))$). For short, let $\Upsilon_\tau$ denote
$T_\tau(\Upsilon(R(x,y)))$.

Let us show that the \mbox{$\mathcal{L}(\tau)$-}sen\-ten\-ce $\Upsilon_{\tau}$ defines a $\mathcal{N}$-comp\-le\-te
problem. We can easily reduce $\mathrm{MOD}[\Upsilon(R(x,y))]$ to $\mathrm{MOD}[\Upsilon_\tau]$ in the following way.
Let $A$ be an arbitrary $\sigma$-structure. Then, we take a $\tau$-structure $B$ such that $\|B\|=\|A\|$, either
$R_k^{B}=R^A$ if $a_k=2$, or $R_k^{B}=R^A \times \{0\}^{a_k-2}$ if $a_k>2$; ${R^{B}_1=\varnothing}, \ldots,
{R^{B}_{k-1}=\varnothing}, {R^{B}_{k+1}=\varnothing}, \ldots, {R^{B}_m=\varnothing}$. Note that $A$ is a model of
$\Upsilon(R(x,y))$ if and only if $B$ is a model of $\Upsilon_\tau$. It is obvious that $\mathrm{MOD}[\Upsilon_\tau]$
is $\mathcal{N}$-complete as well as $\mathrm{MOD}[\Upsilon(R(x,y))]$. Thus, the
\mbox{$\mathcal{L}(\tau)$-}sen\-ten\-ce $\Upsilon_{\tau}$ indeed defines some $\mathcal{N}$-comp\-le\-te problem.

With each $\mathcal{L}_\mathcal{N}(\tau)$-sen\-ten\-ce $\Gamma$ and with each $\mathcal{N}$-spe\-ci\-fic Turing machine
$T^\mathrm{MOD[\Gamma]}$, we associate the following set of $\tau$-struc\-tu\-res:
$$
\begin{array}{rl}
  S_{\Gamma,T^\mathrm{MOD[\Gamma]}}\triangleq\{\:A\in
\mathrm{STRUC}[\tau] \mid & \!\!\!\!\! (\forall B \in
\mathrm{STRUC}[\tau]) \textrm{[}\:\|B\|
> \ell^{(2)}(|\langle A\rangle|)\, \vee\\ & \!\!\!\!\!
    (T^\mathrm{MOD[\Gamma]}\ \textrm{accepts}\ \langle B\rangle \Leftrightarrow B \models \Upsilon_\tau)\:\textrm{]}\:\}.
\end{array}
$$

Like $S^{<}_{\Gamma,T^\mathrm{MOD[\Gamma]}}$, the set $S_{\Gamma,T^\mathrm{MOD[\Gamma]}}$ has similar properties. Note
that if the condition ``$T^\mathrm{MOD[\Gamma]}$ accepts $\langle B\rangle$\ $\Leftrightarrow$\ $B \models
\Upsilon_\tau$'' is satisfied for all $B\in \mathrm{STRUC}[\tau]$, then $T^\mathrm{MOD[\Gamma]}$ is a Turing reduction
from $\mathrm{MOD}[\Upsilon_\tau]$ to $\mathrm{MOD}[\Gamma]$. In this case, $\mathrm{MOD}[\Gamma]$ is a
$\mathcal{N}$-complete model class, and $S_{\Gamma,T^\mathrm{MOD[\Gamma]}}=\mathrm{STRUC}[\tau]$. If this is not the
case, then $S_{\Gamma,T^\mathrm{MOD[\Gamma]}}$ is finite. Let us show how to find a
$\mathcal{L}_\mathcal{N}(\tau)$-sen\-ten\-ce that defines the model class $S_{\Gamma,T^\mathrm{MOD[\Gamma]}}$.

The complexity class $\mathrm{DTIME}[n]$ consists of all problems decidable in linear time in the input size $n$. The
following theorem holds.

\begin{theorem}
\label{Theorem5} $S_{\Gamma,T^\mathrm{MOD[\Gamma]}} \in \mathrm{DTIME}[n]$.\end{theorem}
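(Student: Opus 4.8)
The plan is to follow the same strategy as in the proof of Theorem~\ref{Theorem2}, but with a standard multi-tape Turing machine in place of the random access machine and with the larger, linear time budget $O(n)$ that $\mathrm{DTIME}[n]$ allows. Given a $\tau$-struc\-tu\-re $A$, let $\hat{n}$ denote $|\langle A\rangle|$. First I would scan the input to compute $\hat{n}$; unlike the logarithmic-time setting of Theorem~\ref{Theorem2}, here the machine may read the whole input, so this costs only $O(\hat{n})$ time. From $\hat{n}$ I would then compute $\ell^{(2)}(\hat{n})$ by counting, in turn, the number of bits of $\hat{n}$ and of $\ell(\hat{n})$ represented in binary, which is negligible compared with $O(\hat{n})$.

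Next I would enumerate all $\tau$-struc\-tu\-res $B$ with $\|B\|\le\ell^{(2)}(\hat{n})$ and, for each of them, verify the condition
$$T^\mathrm{MOD[\Gamma]}\ \textrm{accepts}\ \langle B\rangle\ \Leftrightarrow\ B\models\Upsilon_\tau.$$
If this condition fails for some enumerated $B$, the input $A$ is rejected; otherwise $A\in S_{\Gamma,T^\mathrm{MOD[\Gamma]}}$ by the definition of this set. Since $|\langle B\rangle|\le p(\|B\|)$ for a polynomial $p$ depending only on $\tau$, every such $B$ satisfies $|\langle B\rangle|\le p(\log\log\hat{n})$, so the number of structures in the enumeration is at most $2^{p(\log\log\hat{n})}$. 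Exactly as in Theorem~\ref{Theorem2}, deciding $B\models\Upsilon_\tau$ costs at most $2^{p_1(\log\log\hat{n})}$, because any model-checking problem in $\mathcal{N}$ is decidable in at most exponential time, and simulating $T^\mathrm{MOD[\Gamma]}$ together with its oracle queries to $\mathrm{MOD}[\Gamma]$ costs at most $2^{p_2(\log\log\hat{n})}$. Taking the product of the enumeration count and the per-structure cost, the whole enumeration-and-verification phase runs in time $2^{p''(\log\log\hat{n})}$ for a suitable polynomial $p''$.

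The crux, and the step I expect to require the most care, is to show that this phase is negligible against the linear budget, that is $2^{p''(\log\log\hat{n})}=o(\hat{n})$. This follows from $\lim_{\hat{n}\to\infty}2^{p''(\log\log\hat{n})}/\hat{n}=0$, which holds for every polynomial $p''$ since $p''(\log\log\hat{n})=(\log\log\hat{n})^{O(1)}=o(\log\hat{n})$ while $\hat{n}=2^{\log\hat{n}}$. It is precisely here that the two-fold iteration $\ell^{(2)}$ in the definition of $S_{\Gamma,T^\mathrm{MOD[\Gamma]}}$ is used: against a linear rather than logarithmic budget, a single logarithm $\ell^{(1)}$ would only yield $2^{p''(\log\hat{n})}$, which need not be $o(\hat{n})$, whereas the triple logarithm $\ell^{(3)}$ of Theorem~\ref{Theorem2} was tuned to the smaller logarithmic budget there. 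Combining the two estimates, the overall running time is $O(\hat{n})+o(\hat{n})=O(\hat{n})=O(n)$, whence $S_{\Gamma,T^\mathrm{MOD[\Gamma]}}\in\mathrm{DTIME}[n]$.
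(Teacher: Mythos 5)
Your proposal is correct and follows essentially the same route as the paper's proof: compute $\ell^{(2)}(\hat{n})$ in time $O(\hat{n})$ on an ordinary Turing machine, enumerate the at most $2^{p(\log\log\hat{n})}$ structures $B$ with $\|B\|\le\ell^{(2)}(\hat{n})$, verify condition~\eqref{eq3} for each in time $2^{p'(\log\log\hat{n})}$, and observe that the product is $o(\hat{n})$. Your added remark explaining why the double logarithm (as opposed to a single or triple one) is exactly what the linear budget requires is a useful elaboration of a limit the paper merely asserts, but it does not change the argument.
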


\begin{proof} It is similar to the proof of Theorem~\ref{Theorem2} in many respects.
Let us decide in linear time whether $A\in S_{\Gamma,T^\mathrm{MOD[\Gamma]}}$ or not, given a
\mbox{$\tau$-}struc\-tu\-re $A$. Let $\hat{n}$ denote~$|\langle A\rangle|$.

First, we need to compute the number $\ell^{(2)}(\hat{n})$. Suppose that we use a usual Turing machine rather than a
random access machine. In this case, it is obvious that the time necessary for computing $\ell^{(2)}(\hat{n})$ is
$O(\hat{n})$.

Second, we enumerate all \mbox{$\tau$-}struc\-tu\-res $B\in \mathrm{STRUC}[\tau]$ such that
$\|B\|\le\ell^{(2)}(\hat{n})$. Since $|\langle B\rangle|\le p(\|B\|)$, the value $|\langle B\rangle|$ is bounded above
by $p(\log \log  \hat{n})$, where $p$ is a polynomial dependent on the vocabulary $\tau$.  Therefore, this enumeration
takes time $O(2^{p(\log \log \hat{n})})$. Then, for each \mbox{$\tau$-}struc\-tu\-re $B$ of the enumeration, we need to
verify the following condition:
\begin{equation}
T^\mathrm{MOD[\Gamma]}\ \textrm{accepts}\ \langle B\rangle\ \Leftrightarrow\ B \models
\Upsilon_\tau.\label{eq3}
\end{equation}

If condition~\eqref{eq3} is not satisfied for some \mbox{$\tau$-}struc\-tu\-re $B$ of the enumeration, then the input
\mbox{$\tau$-}struc\-tu\-re $A$ does not belong to $S_{\Gamma,T^\mathrm{MOD[\Gamma]}}$. Otherwise, $A \in
S_{\Gamma,T^\mathrm{MOD[\Gamma]}}$. Now, let us estimate the running time for verifying condition~\eqref{eq3}.

We can decide whether ``$B \models \Upsilon_\tau$'' for one \mbox{$\tau$-}struc\-tu\-re $B$ in time $O(2^{p_1(\log \log
\hat{n})})$, where $p_1$ is a polynomial. This follows from the fact that any model-checking problem in $\mathcal{N}$
is decidable at most in exponential time. In a similar manner, the verification of ``$T^\mathrm{MOD[\Gamma]}$ accepts
$\langle B\rangle$'' takes time $O(2^{p_2(\log \log \hat{n})})$, where $p_2$ is a polynomial. This follows from the
fact that a simulation of running $T^\mathrm{MOD[\Gamma]}$ together with a simulation of oracle queries to
$\mathrm{MOD}[\Gamma]$ requires at most exponential time as well. Therefore, the verification of~\eqref{eq3} takes time
$O(2^{p_1(\log \log \hat{n})}+2^{p_2(\log \log \hat{n})})$, or, in short, $O(2^{p'(\log \log \hat{n})})$, where either
$p'=p_1$ in case of $\displaystyle\lim_{\hat{n}\to\infty}\frac{p_2(\hat{n})}{p_1(\hat{n})}<\infty$, or $p'=p_2$
otherwise.

Consequently, the total time to verify~\eqref{eq3} for all \mbox{$\tau$-}struc\-tu\-res $B$ of the enumeration is
$O(2^{p(\log \log \hat{n})}\cdot2^{p'(\log \log  \hat{n})})$, or, in short, $O(2^{p''(\log \log  \hat{n})})$, where
$p''=p+p'$. Note that $\displaystyle\lim_{\hat{n}\to\infty}\frac{2^{p''(\log \log \hat{n})}}{\hat{n}}=0$ for any
polynomial $p''$. Then,
this
time can be approximately estimated at most as $O(\hat{n})$.

Thus, the overall running time consists of the time for computing $\ell^{(2)}(\hat{n})$ and of the time for
verifying~\eqref{eq3} on all \mbox{$\tau$-}struc\-tu\-res $B$ of the enumeration. The both of these times are estimated
as $O(\hat{n})$. Therefore, so is the overall running time. This concludes the proof of the theorem. \end{proof}

\begin{corollary}
\label{Corollary6} Given a $\mathcal{L}_\mathcal{N}(\tau)$-sen\-ten\-ce $\Gamma$ and a $\mathcal{N}$-spe\-ci\-fic
Turing machine $T^\mathrm{MOD[\Gamma]}$, one can construct a $\mathcal{L}_\mathcal{N}(\tau)$-sen\-ten\-ce that defines
$S_{\Gamma,T^\mathrm{MOD[\Gamma]}}$.\end{corollary}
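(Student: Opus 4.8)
The plan is to follow the template of Corollary~\ref{Corollary3}, but to replace its order-dependent final step by a direct appeal to the capturing property of $\mathcal{L}_\mathcal{N}$ itself. In Corollary~\ref{Corollary3} the membership $S^{<}_{\Gamma,T^\mathrm{MOD[\Gamma]}}\in\mathrm{DTIME}[\log n]$ was turned into an $\mathrm{FO}(\tau_{<})$-sentence via the correspondence between logarithmic-time random access machines and $\mathrm{FO}$ with BIT on ordered structures. Here no order is available, so that correspondence cannot be used; instead I would observe that $\mathrm{DTIME}[n]$ already lies inside $\mathcal{N}$ and let the capturing logic $\mathcal{L}_\mathcal{N}$ do the work, which is also why the construction now yields a $\mathcal{L}_\mathcal{N}(\tau)$-sentence rather than an $\mathrm{FO}(\tau)$-sentence.

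Concretely, first I would check that $S_{\Gamma,T^\mathrm{MOD[\Gamma]}}$ is a model class, i.e. closed under isomorphism. This holds because membership of $A$ depends on $A$ only through the quantity $|\langle A\rangle|$: isomorphic $\tau$-structures have equal universe cardinality and carry the same number of tuples in each relation, hence are encoded by binary strings of equal length, so $|\langle A\rangle|$ is an isomorphism invariant and the defining condition cannot distinguish isomorphic structures. Next, by Theorem~\ref{Theorem5} we have $S_{\Gamma,T^\mathrm{MOD[\Gamma]}}\in\mathrm{DTIME}[n]$, and since $\mathrm{DTIME}[n]\subseteq\mathrm{P}\subseteq\mathcal{N}$ for each $\mathcal{N}\in\{\mathrm{coNP},\mathrm{NP},\mathrm{PSPACE}\}$, this model class belongs to $\mathcal{N}$. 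Finally, because $\mathcal{L}_\mathcal{N}$ captures $\mathcal{N}$ on $\mathrm{STRUC}[\tau]$, there is a $\mathcal{L}_\mathcal{N}(\tau)$-sentence defining $S_{\Gamma,T^\mathrm{MOD[\Gamma]}}$; note that this argument is insensitive to whether the set is finite or equal to all of $\mathrm{STRUC}[\tau]$, so both cases are handled uniformly.

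The delicate point is the word ``construct'' together with the fact that $\tau$ is an arbitrary unordered non-Aristotelian vocabulary. The bare capturing definition only guarantees the existence of a defining sentence on $\mathrm{STRUC}[\tau]$; to obtain an effective construction I would invoke the constructive form of the relevant capturing theorems: from the $\mathrm{DTIME}[n]$ machine of Theorem~\ref{Theorem5}, viewed as a machine witnessing membership in $\mathcal{N}$, one effectively builds an $\mathrm{SO}\exists$-, $\mathrm{SO}\forall$-, or $\mathrm{SO(PFP)}$-sentence according as $\mathcal{N}$ is $\mathrm{NP}$, $\mathrm{coNP}$, or $\mathrm{PSPACE}$. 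The main obstacle, and the reason this subsection excludes $\mathrm{NL}$ and $\mathrm{P}$ while the ordered case did not, is precisely that these second-order logics capture their classes on \emph{all} structures (via Fagin's theorem and its analogues), so no linear order on the universe is needed; for $\mathrm{NL}$ and $\mathrm{P}$ no such order-free capturing logic is known, which is exactly why the construction cannot be carried out there.
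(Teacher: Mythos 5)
Your proposal is correct and follows essentially the same route as the paper: both rest on Theorem~\ref{Theorem5} giving $S_{\Gamma,T^\mathrm{MOD[\Gamma]}}\in\mathrm{DTIME}[n]\subseteq\mathcal{N}$ and then invoke the constructive, order-free capturing results ($\mathrm{SO}\exists$ via Fagin for $\mathrm{NP}$, reused for $\mathrm{PSPACE}$ as a fragment of $\mathrm{SO(PFP)}$, and $\mathrm{SO}\forall$ for $\mathrm{coNP}$). The only difference is one of explicitness: the paper spells out the $\mathrm{coNP}$ case by applying Fagin's theorem to the complement (also in $\mathrm{DTIME}[n]$) and negating, whereas you subsume this under a generic appeal to the constructive capturing theorems.
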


\begin{proof} By Fagin's theorem~\cite{Fagin}, in case of $\mathcal{N}=\mathrm{NP}$, we can
construct a \mbox{$\mathrm{SO}\exists(\tau)$-se}\-n\-ten\-ce to define the model class
$S_{\Gamma,T^\mathrm{MOD[\Gamma]}}$ since $S_{\Gamma,T^\mathrm{MOD[\Gamma]}}\in\mathrm{DTIME}[n]$. In case of
$\mathcal{N}=\mathrm{PSPACE}$, we can use the same $\mathrm{SO}\exists(\tau)$-sen\-ten\-ce since $\mathrm{SO}\exists$
is a fragment of $\mathrm{SO(PFP)}$. In case of $\mathcal{N}=\mathrm{coNP}$, we can at first construct a
$\mathrm{SO}\exists(\tau)$-sen\-ten\-ce to define the model class $\mathrm{STRUC}[\tau]\setminus
S_{\Gamma,T^\mathrm{MOD[\Gamma]}}$ since $\mathrm{STRUC}[\tau]\setminus
S_{\Gamma,T^\mathrm{MOD[\Gamma]}}\in\mathrm{DTIME}[n]$ as well as
$S_{\Gamma,T^\mathrm{MOD[\Gamma]}}\in\mathrm{DTIME}[n]$. Then, the $\mathrm{SO}\forall(\tau)$-sen\-ten\-ce defining
$S_{\Gamma,T^\mathrm{MOD[\Gamma]}}$ is obtained by logical negation of this $\mathrm{SO}\exists(\tau)$-sen\-ten\-ce.
\end{proof}

Then, by $\Theta_{\Gamma,T^\mathrm{MOD[\Gamma]}}$ we denote the constructed
$\mathcal{L}_\mathcal{N}(\tau)$-sen\-ten\-ce defining the model class $S_{\Gamma,T^\mathrm{MOD[\Gamma]}}$. This
$\Theta_{\Gamma,T^\mathrm{MOD[\Gamma]}}$ characterizes the corresponding Turing machine $T^\mathrm{MOD[\Gamma]}$ in two
different ways. First, $T^\mathrm{MOD[\Gamma]}$ is a Turing reduction from $\mathrm{MOD}[\Upsilon_\tau]$ to
$\mathrm{MOD}[\Gamma]$ if and only if $\Theta_{\Gamma,T^\mathrm{MOD[\Gamma]}}$ is logically valid. Second,
$T^\mathrm{MOD[\Gamma]}$ does not realize any Turing reduction from $\mathrm{MOD}[\Upsilon_\tau]$ to
$\mathrm{MOD}[\Gamma]$ if and only if $\mathrm{MOD}[\Theta_{\Gamma,T^\mathrm{MOD[\Gamma]}}]$ is finite. We call the
\mbox{$\mathcal{L}_\mathcal{N}(\tau)$-}sen\-ten\-ce $\Theta_{\Gamma,T^\mathrm{MOD[\Gamma]}}$ the characteristic
sen\-ten\-ce for the pair $(\Gamma,T^\mathrm{MOD[\Gamma]})$.

Similarly, by $\overline{\Theta}_{\Gamma,T^\mathrm{MOD[\Gamma]}}$ we denote a
$\mathcal{L}_\mathcal{N}(\tau)$-sen\-ten\-ce that defines the model class $\mathrm{STRUC}[\tau]\setminus
S_{\Gamma,T^\mathrm{MOD[\Gamma]}}$. Given a $\mathcal{L}_\mathcal{N}(\tau)$-sen\-ten\-ce $\Gamma$ and a
$\mathcal{N}$-spe\-ci\-fic Turing machine $T^\mathrm{MOD[\Gamma]}$, this
$\overline{\Theta}_{\Gamma,T^\mathrm{MOD[\Gamma]}}$ can be also constructed since $\mathrm{STRUC}[\tau]\setminus
S_{\Gamma,T^\mathrm{MOD[\Gamma]}}\in\mathrm{DTIME}[n]$ as well as
$S_{\Gamma,T^\mathrm{MOD[\Gamma]}}\in\mathrm{DTIME}[n]$. Note that $\overline{\Theta}_{\Gamma,T^\mathrm{MOD[\Gamma]}}$
is logically equivalent to $\neg{\Theta}_{\Gamma,T^\mathrm{MOD[\Gamma]}}$. However, we cannot directly use
$\neg{\Theta}_{\Gamma,T^\mathrm{MOD[\Gamma]}}$ to construct a canonical form since
$\neg{\Theta}_{\Gamma,T^\mathrm{MOD[\Gamma]}}$ is not a \mbox{$\mathcal{L}_\mathcal{N}(\tau)$-}sen\-ten\-ce in case of
$\mathcal{N}\in\{\mathrm{NP}, \mathrm{coNP}\}$. Therefore, we will use
$\overline{\Theta}_{\Gamma,T^\mathrm{MOD[\Gamma]}}$ instead. We call the
\mbox{$\mathcal{L}_\mathcal{N}(\tau)$-}sen\-ten\-ce $\overline{\Theta}_{\Gamma,T^\mathrm{MOD[\Gamma]}}$ the
complementary sentence for ${\Theta}_{\Gamma,T^\mathrm{MOD[\Gamma]}}$. Then, the following theorem holds.

\begin{theorem}
\label{Theorem7} Let $\mathcal{L}_\mathcal{N}$ be a logic capturing a complexity class
$\mathcal{N}$ on $\mathrm{STRUC}[\tau]$,
and $\Pi\subseteq \mathrm{STRUC}[\tau]$ a model class. Then, $\Pi$ is $\mathcal{N}$-complete if and only if there
exists a $\mathcal{L}_\mathcal{N}(\tau)$-sen\-ten\-ce $\Lambda$ such that $\mathrm{MOD}[\Lambda]=\Pi$ and $\Lambda$ is
of the form
\begin{equation}
(\Theta_{\Gamma,T^\mathrm{MOD[\Gamma]}} \wedge \Gamma) \vee
(\overline\Theta_{\Gamma,T^\mathrm{MOD[\Gamma]}} \wedge \Upsilon_\tau)\label{eq4}
\end{equation}
where $\Gamma$ is a $\mathcal{L}_\mathcal{N}(\tau)$-sen\-ten\-ce; $T^\mathrm{MOD[\Gamma]}$ a $\mathcal{N}$-spe\-ci\-fic
Turing machine; $\Theta_{\Gamma,T^\mathrm{MOD[\Gamma]}}$ the characteristic sen\-ten\-ce for the pair
$(\Gamma,T^\mathrm{MOD[\Gamma]})$; $\overline{\Theta}_{\Gamma,T^\mathrm{MOD[\Gamma]}}$ the complementary sentence for
${\Theta}_{\Gamma,T^\mathrm{MOD[\Gamma]}}$.\end{theorem}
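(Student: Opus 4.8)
The plan is to mirror the proof of Theorem~\ref{Theorem4}, replacing the characteristic sentence $\gamma_{\Gamma,T^\mathrm{MOD[\Gamma]}}$ by $\Theta_{\Gamma,T^\mathrm{MOD[\Gamma]}}$ and its negation $\neg\gamma_{\Gamma,T^\mathrm{MOD[\Gamma]}}$ by the complementary sentence $\overline{\Theta}_{\Gamma,T^\mathrm{MOD[\Gamma]}}$. Throughout I abbreviate the form~\eqref{eq4} by $\Phi_{\Gamma,T^\mathrm{MOD[\Gamma]}}$. A preliminary remark is that $\Phi_{\Gamma,T^\mathrm{MOD[\Gamma]}}$ is genuinely a $\mathcal{L}_\mathcal{N}(\tau)$-sentence: each of $\Gamma$, $\Upsilon_\tau$, $\Theta_{\Gamma,T^\mathrm{MOD[\Gamma]}}$, and $\overline{\Theta}_{\Gamma,T^\mathrm{MOD[\Gamma]}}$ is a $\mathcal{L}_\mathcal{N}(\tau)$-sentence, and each of the logics $\mathrm{SO}\exists$, $\mathrm{SO}\forall$, $\mathrm{SO(PFP)}$ is closed under $\wedge$ and $\vee$. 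Consequently $\mathrm{MOD}[\Phi_{\Gamma,T^\mathrm{MOD[\Gamma]}}]\in\mathcal{N}$ by the capturing property. This is precisely why $\overline{\Theta}_{\Gamma,T^\mathrm{MOD[\Gamma]}}$ is introduced instead of reusing $\neg\Theta_{\Gamma,T^\mathrm{MOD[\Gamma]}}$, which need not lie in $\mathcal{L}_\mathcal{N}(\tau)$ when $\mathcal{N}\in\{\mathrm{NP},\mathrm{coNP}\}$.

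For the first direction I would prove that $\Phi_{\Gamma,T^\mathrm{MOD[\Gamma]}}$ defines a $\mathcal{N}$-complete problem for every pair $(\Gamma,T^\mathrm{MOD[\Gamma]})$, splitting into two cases according to whether $T^\mathrm{MOD[\Gamma]}$ realizes a Turing reduction from $\mathrm{MOD}[\Upsilon_\tau]$ to $\mathrm{MOD}[\Gamma]$. If it does, then $\Theta_{\Gamma,T^\mathrm{MOD[\Gamma]}}$ is logically valid, whence $\overline{\Theta}_{\Gamma,T^\mathrm{MOD[\Gamma]}}$ (being logically equivalent to its negation) is unsatisfiable; the form collapses to $\Phi_{\Gamma,T^\mathrm{MOD[\Gamma]}}\equiv\Gamma$, and $\mathrm{MOD}[\Gamma]$ is $\mathcal{N}$-complete since $T^\mathrm{MOD[\Gamma]}$ reduces the complete class $\mathrm{MOD}[\Upsilon_\tau]$ to it. If it does not, then $\mathrm{MOD}[\Theta_{\Gamma,T^\mathrm{MOD[\Gamma]}}]$ is finite and $\mathrm{MOD}[\overline{\Theta}_{\Gamma,T^\mathrm{MOD[\Gamma]}}]$ is cofinite; analysing $(\Theta_{\Gamma,T^\mathrm{MOD[\Gamma]}}\wedge\Gamma)\vee(\overline{\Theta}_{\Gamma,T^\mathrm{MOD[\Gamma]}}\wedge\Upsilon_\tau)$ shows that $\mathrm{MOD}[\Phi_{\Gamma,T^\mathrm{MOD[\Gamma]}}]$ differs from the complete class $\mathrm{MOD}[\Upsilon_\tau]$ only on a finite set of $\tau$-structures. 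Since it already belongs to $\mathcal{N}$ by the preliminary remark, and since a finite symmetric difference can be absorbed by hard-coding finitely many exceptions into the reduction, $\mathrm{MOD}[\Phi_{\Gamma,T^\mathrm{MOD[\Gamma]}}]$ is again $\mathcal{N}$-complete.

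For the second direction I would start from an arbitrary $\mathcal{N}$-complete model class $\Pi$. Because $\mathcal{L}_\mathcal{N}$ captures $\mathcal{N}$ on $\mathrm{STRUC}[\tau]$, there is a $\mathcal{L}_\mathcal{N}(\tau)$-sentence $\Gamma$ with $\mathrm{MOD}[\Gamma]=\Pi$; because $\Pi$ is $\mathcal{N}$-complete and $\mathrm{MOD}[\Upsilon_\tau]\in\mathcal{N}$, there is a $\mathcal{N}$-specific Turing machine $T^\mathrm{MOD[\Gamma]}$ realizing a Turing reduction from $\mathrm{MOD}[\Upsilon_\tau]$ to $\mathrm{MOD}[\Gamma]$. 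Then $\Theta_{\Gamma,T^\mathrm{MOD[\Gamma]}}$ is logically valid, so exactly as in the first case $\Phi_{\Gamma,T^\mathrm{MOD[\Gamma]}}\equiv\Gamma$, and hence $\mathrm{MOD}[\Phi_{\Gamma,T^\mathrm{MOD[\Gamma]}}]=\Pi$. Taking $\Lambda=\Phi_{\Gamma,T^\mathrm{MOD[\Gamma]}}$ exhibits $\Pi$ in the required form.

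The routine calculations, namely the collapse $\Phi_{\Gamma,T^\mathrm{MOD[\Gamma]}}\equiv\Gamma$ under validity of $\Theta_{\Gamma,T^\mathrm{MOD[\Gamma]}}$ and the finite-difference bookkeeping, are immediate. The only genuinely delicate point, and the reason this statement is not a verbatim copy of Theorem~\ref{Theorem4}, is the need to keep every constituent inside $\mathcal{L}_\mathcal{N}(\tau)$. I expect the main obstacle to be justifying that $\overline{\Theta}_{\Gamma,T^\mathrm{MOD[\Gamma]}}$ can play the role of $\neg\Theta_{\Gamma,T^\mathrm{MOD[\Gamma]}}$ throughout; this rests on the logical equivalence $\overline{\Theta}_{\Gamma,T^\mathrm{MOD[\Gamma]}}\equiv\neg\Theta_{\Gamma,T^\mathrm{MOD[\Gamma]}}$ recorded above, together with closure of the three logics under the propositional connectives $\wedge$ and $\vee$.
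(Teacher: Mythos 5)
Your proposal is correct and follows essentially the same route as the paper's own proof: the same case split on whether $T^\mathrm{MOD[\Gamma]}$ realizes a Turing reduction from $\mathrm{MOD}[\Upsilon_\tau]$ to $\mathrm{MOD}[\Gamma]$, the same collapse of the form to $\Gamma$ when $\Theta_{\Gamma,T^\mathrm{MOD[\Gamma]}}$ is logically valid, and the same finite-symmetric-difference argument in the other case. Your added remarks on closure of the logics under $\wedge$ and $\vee$ and on why $\overline{\Theta}_{\Gamma,T^\mathrm{MOD[\Gamma]}}$ replaces $\neg\Theta_{\Gamma,T^\mathrm{MOD[\Gamma]}}$ only make explicit what the paper leaves implicit.
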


\begin{proof} For short, we denote the form~\eqref{eq4}
by
$\Phi_{\Gamma,T^\mathrm{MOD[\Gamma]}}$.

First, we will prove that $\Phi_{\Gamma,T^\mathrm{MOD[\Gamma]}}$ defines a $\mathcal{N}$-complete problem for any pair
$(\Gamma,T^\mathrm{MOD[\Gamma]})$. Let $T^\mathrm{MOD[\Gamma]}$ be a Turing reduction from
$\mathrm{MOD}[\Upsilon_\tau]$ to $\mathrm{MOD}[\Gamma]$. Then, $\Theta_{\Gamma,T^\mathrm{MOD[\Gamma]}}$ is logically
valid. In this case, the \mbox{$\mathcal{L}_\mathcal{N}(\tau)$-}sen\-ten\-ce $\Phi_{\Gamma,T^\mathrm{MOD[\Gamma]}}$ is
logically equivalent to $\Gamma$, and $\Gamma$ defines a $\mathcal{N}$-complete problem. Therefore,
$\Phi_{\Gamma,T^\mathrm{MOD[\Gamma]}}$ defines the same $\mathcal{N}$-complete problem as well. Now, let
$T^\mathrm{MOD[\Gamma]}$ be not a Turing reduction from $\mathrm{MOD}[\Upsilon_\tau]$ to $\mathrm{MOD}[\Gamma]$. Then,
$\mathrm{MOD}[\Theta_{\Gamma,T^\mathrm{MOD[\Gamma]}}]$ is finite, and
$\mathrm{MOD}[\overline\Theta_{\Gamma,T^\mathrm{MOD[\Gamma]}}]$ is cofinite. Therefore, the model class
$\mathrm{MOD}[\Phi_{\Gamma,T^\mathrm{MOD[\Gamma]}}]$ differs from the model class $\mathrm{MOD}[\Upsilon_\tau]$ only in
a finite set of \mbox{$\tau$-}struc\-tu\-res, i.e. $\mathrm{MOD}[\Phi_{\Gamma,T^\mathrm{MOD[\Gamma]}}] \setminus
\mathrm{MOD}[\Upsilon_\tau]$ and $\mathrm{MOD}[\Upsilon_\tau] \setminus
\mathrm{MOD}[\Phi_{\Gamma,T^\mathrm{MOD[\Gamma]}}]$ are both finite. Hence,
$\mathrm{MOD}[\Phi_{\Gamma,T^\mathrm{MOD[\Gamma]}}]$ is \mbox{$\mathcal{N}$-}comp\-le\-te. Consequently,
$\Phi_{\Gamma,T^\mathrm{MOD[\Gamma]}}$ defines a $\mathcal{N}$-complete problem in this case as well.

Second, we will prove that any $\mathcal{N}$-complete problem can be represented by means of the form
$\Phi_{\Gamma,T^\mathrm{MOD[\Gamma]}}$. Let $\Gamma$ be a $\mathcal{L}_\mathcal{N}(\tau)$-sen\-ten\-ce defining an
arbitrary $\mathcal{N}$-complete problem, and $T^\mathrm{MOD[\Gamma]}$ a $\mathcal{N}$-spe\-ci\-fic Turing machine
realizing a Turing reduction from $\mathrm{MOD}[\Upsilon_\tau]$ to $\mathrm{MOD}[\Gamma]$. Then,
$\Theta_{\Gamma,T^\mathrm{MOD[\Gamma]}}$ is logically valid. In this case, the
$\mathcal{L}_\mathcal{N}(\tau)$-sen\-ten\-ce $\Gamma$ is logically equivalent to the
\mbox{$\mathcal{L}_\mathcal{N}(\tau)$-}sen\-ten\-ce $\Phi_{\Gamma,T^\mathrm{MOD[\Gamma]}}$. Since $\Gamma$ defines a
\mbox{$\mathcal{N}$-}co\-m\-ple\-te problem, so does $\Phi_{\Gamma,T^\mathrm{MOD[\Gamma]}}$. This concludes the proof
of the theorem. \end{proof}

Thus, the form \eqref{eq4} can serve as a canonical form providing a syntactic tool for showing
\mbox{$\mathcal{N}$-}comp\-le\-te\-ness: if a problem on unordered structures in $\mathcal{N}$ is defined by a
sen\-ten\-ce of the form, then the problem proves to be \mbox{$\mathcal{N}$-}comp\-le\-te via Turing reductions.

\section{\label{sec4}Logics for completeness}

At first, we will consider ordered structures. Let an arbitrary vocabulary $\tau_{<}$ containing $<$ be given. Let us
address the following question. Can one recursively enumerate all
\mbox{$\mathcal{L}_\mathcal{N}(\tau_{<})$-sen}\-ten\-ces that define \mbox{$\mathcal{N}$-}comp\-le\-te problems on
$\mathrm{STRUC}[\tau_{<}]$, where $\mathcal{N}$ denotes one of the following complexity classes: $\mathrm{NL}$,
$\mathrm{P}$, $\mathrm{coNP}$, $\mathrm{NP}$, or $\mathrm{PSPACE}$? We will answer this question in the negative.

We will use notions of context-free languages~\cite{Ginsburg}, and exploit them in a similar manner as in
\cite{Naidenko}. Let $\Sigma$ be a fixed alphabet containing at least two symbols. By $G$ and $L(G)$ we mean a
context-free grammar and the context-free language defined by this grammar respectively. We assume that there cannot be
any finite \mbox{$\mathcal{N}$-comp}\-le\-te  set. Suppose that $\Phi_1,\Phi_2,\ldots$ is a recursive enumeration of
all \mbox{$\mathcal{L}_\mathcal{N}(\tau_{<})$-sen}\-ten\-ces defining $\mathcal{N}$-comp\-le\-te problems. With every
context-free grammar $G$ with the terminal alphabet $\Sigma$, we associate the following set:
$$
\begin{array}{rl}
  S^{<}_{G}\triangleq\{\:A\in
\mathrm{STRUC}[\tau_{<}] \mid & \!\!\!\!\! (\forall w \in \Sigma^{*})
\textrm{[}\,|w|
> \ell^{(3)}(|\langle A\rangle|)\, \vee\, w\in L(G) \textrm{]}\:\}.
\end{array}
$$

Since the decision problem of $w\in?\,L(G)$ requires at most polynomial time, we have $S^{<}_{G} \in
\mathrm{DTIME}[\log n]$ by analogy with Theorem~\ref{Theorem2}. Then, there is a
\mbox{$\mathrm{FO}(\tau_{<})$-sen}\-ten\-ce $\gamma_G$ that defines $S^{<}_{G}$. Note that if $L(G)=\Sigma^{*}$, then
$\gamma_G$ is logically valid. Otherwise, the set $\mathrm{MOD}[\gamma_G]$ is finite. It follows that the
\mbox{$\mathcal{L}_\mathcal{N}(\tau_{<})$-sen}\-ten\-ce $\gamma_G \wedge \Upsilon_{\tau_{<}}$ defines a
$\mathcal{N}$-comp\-le\-te problem if and only if the statement $L(G)=\Sigma^{*}$ holds.

Let an arbitrary context-free grammar $G$ with the terminal alphabet $\Sigma$ be given. Let us decide whether
$L(G)=\Sigma^{*}$ or not. We concurrently start the following two algorithms. First, we enumerate all strings in
$\Sigma^{*}$. If we can find a string $w$ such that $w\not\in L(G)$, then $L(G)\neq\Sigma^{*}$. Second, we construct
the \mbox{$\mathcal{L}_\mathcal{N}(\tau_{<})$-sen}\-ten\-ce $\gamma_G \wedge \Upsilon_{\tau_{<}}$ for this grammar $G$.
Then, we enumerate the \mbox{$\mathcal{L}_\mathcal{N}(\tau_{<})$-sen}\-ten\-ces $\Phi_1,\Phi_2,\ldots$. If we can find
$\Phi_i$ such that $\Phi_i=\gamma_G \wedge \Upsilon_{\tau_{<}}$, then $L(G)=\Sigma^{*}$. Hence, we can algorithmically
decide whether $L(G)=\Sigma^{*}$ or not. However, this contradicts the fact that the decision problem of $L(G)=?\,
\Sigma^{*}$ is algorithmically undecidable~\cite{Ginsburg}. Thus, one cannot recursively enumerate all
\mbox{$\mathcal{L}_\mathcal{N}(\tau_{<})$-sen}\-ten\-ces that define {$\mathcal{N}$-}comp\-le\-te problems on
$\mathrm{STRUC}[\tau_{<}]$.

Nevertheless, we can use form~\eqref{eq2} in order to recursively enumerate (not all)
\mbox{$\mathcal{L}_\mathcal{N}(\tau_{<})$-sen}\-ten\-ces that define all $\mathcal{N}$-comp\-le\-te problems. Let
$\mathcal{C}_\mathcal{N}(\tau_{<})$ denote the following set of all
\mbox{$\mathcal{L}_\mathcal{N}(\tau_{<})$-sen}\-ten\-ces of form~\eqref{eq2}:
$$
\begin{array}{rl}
\{\Phi\in
\mathcal{L}_\mathcal{N}(\tau_{<})\mid & \!\!\!\!\! \Phi =
(\gamma_{\Gamma,T^\mathrm{MOD[\Gamma]}} \wedge \Gamma) \vee
(\neg\gamma_{\Gamma,T^\mathrm{MOD[\Gamma]}} \wedge
\Upsilon_{\tau_{<}});\ \Gamma\in \mathcal{L}_\mathcal{N}(\tau_{<}),
\\ & \!\!\!\!\!
T^\mathrm{MOD[\Gamma]}\
\textrm{is a \textit{$\mathcal{N}$}-spe\-ci\-fic Turing machine}\}.
\end{array}
$$

Note that $\mathcal{C}_\mathcal{N}(\tau_{<})$ is recursively enumerable since we can effectively enumerate all possible
pairs $(\Gamma,T^\mathrm{MOD[\Gamma]})$. Therefore, the set of all $\mathcal{N}$-complete problems on
$\mathrm{STRUC}[\tau_{<}]$, defined by sentences in $\mathcal{C}_\mathcal{N}(\tau_{<})$, is recursively enumerable as
well.

However, it is hard to determine whether $\mathcal{C}_\mathcal{N}(\tau_{<})$ is recursive. In order to be able to
decide whether $\Phi\in \mathcal{C}_\mathcal{N}(\tau_{<})$, we need to find the corresponding
\mbox{$\mathcal{N}$-}spe\-ci\-fic machine $T^\mathrm{MOD[\Gamma]}$. This seems to be undecidable. Fortunately, we can
change form~\eqref{eq2} in such a way as to obtain a recursive set of
\mbox{$\mathcal{L}_\mathcal{N}(\tau_{<})$-sen}\-ten\-ces defining all $\mathcal{N}$-complete problems on
$\mathrm{STRUC}[\tau_{<}]$.

With each nonempty binary string $w$, we associate the following identically false first order sen\-ten\-ce $\psi_w$:
$$\mathcal{Q}x_1
\ldots\mathcal{Q}x_k[x_1\ne x_1 \wedge \cdots \wedge x_k\ne x_k]$$ where $k=|w|$; $\mathcal{Q}x_i$ denotes either
$\exists x_i$ if $w[i]=1$, or $\forall x_i$ if $w[i]=0$, for every ${1\le i\le k}$. We say that $\psi_w$ encodes the
binary string $w$, and call $\psi_w$ an encoding sentence. Since such a first order sen\-ten\-ce can be considered in a
certain sense as a representation of binary strings, we will use this sen\-ten\-ce for an appropriate encoding of
Turing machines. Let us introduce the following form:
\begin{equation}
(\gamma_{\Gamma,T^\mathrm{MOD[\Gamma]}} \wedge \Gamma) \vee
(\neg\gamma_{\Gamma,T^\mathrm{MOD[\Gamma]}} \wedge \Upsilon_{\tau_{<}}) \vee \psi_{\langle
T^\mathrm{MOD[\Gamma]}\rangle}\label{eq5}
\end{equation}

Note that form~\eqref{eq5} is logically equivalent to form~\eqref{eq2} since $\psi_{\langle
T^\mathrm{MOD[\Gamma]}\rangle}$ is identically false. However, in contrast to~\eqref{eq2}, form~\eqref{eq5} allows us
to easily construct a logic capturing the complexity class of all $\mathcal{N}$-complete problems on all ordered
structures. Indeed, let $\mathcal{CO}\mathcal{L}_\mathcal{N}$ be a mapping from every vocabulary $\tau_{<}$ containing
$<$ to $\mathcal{CO}\mathcal{L}_\mathcal{N}(\tau_{<})$, where $\mathcal{CO}\mathcal{L}_\mathcal{N}(\tau_{<})$ denotes
the following set of $\mathcal{L}_\mathcal{N}(\tau_{<})$-sen\-ten\-ces of form~\eqref{eq5}:
$$
\begin{array}{rl}
\{\Phi\in
\mathcal{L}_\mathcal{N}(\tau_{<})\mid & \!\!\! \Phi =
(\gamma_{\Gamma,T^\mathrm{MOD[\Gamma]}} \wedge \Gamma) \vee
(\neg\gamma_{\Gamma,T^\mathrm{MOD[\Gamma]}} \wedge
\Upsilon_{\tau_{<}}) \vee \psi_{\langle T^\mathrm{MOD[\Gamma]}\rangle};
\\ &  \!\!
\Gamma\in \mathcal{L}_\mathcal{N}(\tau_{<}),\
T^\mathrm{MOD[\Gamma]}\ \textrm{is a
\textit{$\mathcal{N}$}-spe\-ci\-fic Turing machine}
\}.
\end{array}
$$

 Then, the following theorem holds.

\begin{theorem}
\label{Theorem8} Let
 $\mathcal{L}_\mathcal{N}$ be a logic capturing a
complexity class $\mathcal{N}$ (among the classes $\mathrm{NL}$, $\mathrm{P}$, $\mathrm{coNP}$, $\mathrm{NP}$, and
$\mathrm{PSPACE}$) on $\mathrm{STRUC}[\tau_{<}]$ for some fixed vocabulary $\tau_{<}$ containing $<$. Then,
$\mathcal{CO}\mathcal{L}_\mathcal{N}$ is a decidable fragment of $\mathcal{L}_\mathcal{N}$ that captures the complexity
class of all $\mathcal{N}$-complete problems on $\mathrm{STRUC}[\tau_{<}]$.\end{theorem}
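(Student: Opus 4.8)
The plan is to verify the three assertions of the theorem in turn: that $\mathcal{CO}\mathcal{L}_\mathcal{N}$ consists of genuine $\mathcal{L}_\mathcal{N}$-sentences, that it captures exactly the $\mathcal{N}$-complete problems on $\mathrm{STRUC}[\tau_{<}]$, and that it is decidable. The first two are quick consequences of what has already been established; the decidability is the real content, and it is precisely the reason for augmenting form~\eqref{eq2} into form~\eqref{eq5}. For the fragment claim I would observe that every disjunct of form~\eqref{eq5} is assembled from the $\mathcal{L}_\mathcal{N}(\tau_{<})$-sentences $\Gamma$ and $\Upsilon_{\tau_{<}}$ together with the first-order characteristic sentence $\gamma_{\Gamma,T^\mathrm{MOD[\Gamma]}}$ of Corollary~\ref{Corollary3}, its negation (again first-order, hence admissible in each of $\mathrm{FO(TC)}$, $\mathrm{FO(LFP)}$, $\mathrm{SO}\forall$, $\mathrm{SO}\exists$, $\mathrm{SO(PFP)}$), and the identically false encoding sentence $\psi_{\langle T^\mathrm{MOD[\Gamma]}\rangle}$. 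Since each of these logics is closed under $\wedge$ and $\vee$ (the second-order quantifier prefixes of $\Gamma$ and $\Upsilon_{\tau_{<}}$ can be pulled out over the Boolean combination after renaming), form~\eqref{eq5} lies in $\mathcal{L}_\mathcal{N}(\tau_{<})$, so $\mathcal{CO}\mathcal{L}_\mathcal{N}(\tau_{<})\subseteq\mathcal{L}_\mathcal{N}(\tau_{<})$.

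For the capture claim I would exploit that form~\eqref{eq5} is logically equivalent to form~\eqref{eq2}, the trailing disjunct $\psi_{\langle T^\mathrm{MOD[\Gamma]}\rangle}$ being identically false and hence contributing no models. Soundness, namely that every sentence of $\mathcal{CO}\mathcal{L}_\mathcal{N}(\tau_{<})$ defines an $\mathcal{N}$-complete model class, is then immediate from the first half of Theorem~\ref{Theorem4}. Completeness, namely that every $\mathcal{N}$-complete $\Pi\subseteq\mathrm{STRUC}[\tau_{<}]$ is defined by some sentence of the fragment, follows from the second half of Theorem~\ref{Theorem4}: it yields a form~\eqref{eq2} sentence $\Lambda$ with $\mathrm{MOD}[\Lambda]=\Pi$, built from some pair $(\Gamma,T^\mathrm{MOD[\Gamma]})$, and appending $\psi_{\langle T^\mathrm{MOD[\Gamma]}\rangle}$ produces a logically equivalent form~\eqref{eq5} sentence that belongs to $\mathcal{CO}\mathcal{L}_\mathcal{N}(\tau_{<})$ and still defines $\Pi$.

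The crux is decidability, and here the encoding disjunct does the work. Given a candidate $\Phi$, the procedure is to attempt to parse it as $(\alpha\wedge\Gamma)\vee(\neg\alpha\wedge\Upsilon_{\tau_{<}})\vee\psi$ with $\alpha$ first-order, $\Gamma$ an $\mathcal{L}_\mathcal{N}(\tau_{<})$-sentence, the second conjunct of the middle disjunct being syntactically the distinguished $\Upsilon_{\tau_{<}}$, and $\psi$ an encoding sentence; then to decode from $\psi$ the binary string $w$ it encodes and read off the machine $T$ with $\langle T\rangle=w$; then to rerun the fixed deterministic construction of Corollary~\ref{Corollary3} on the pair $(\Gamma,T)$, recomputing the canonical first-order sentence $\gamma_{\Gamma,T}$; and finally to accept iff $\alpha$ coincides syntactically with $\gamma_{\Gamma,T}$ and the first conjunct of the middle disjunct is its negation. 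This is exactly the point of replacing form~\eqref{eq2} by form~\eqref{eq5}: membership in the form~\eqref{eq2} set $\mathcal{C}_\mathcal{N}(\tau_{<})$ seemed undecidable because one had to recover the reducing machine $T^\mathrm{MOD[\Gamma]}$ from $\Phi$, whereas $\psi_{\langle T^\mathrm{MOD[\Gamma]}\rangle}$ now publishes that machine inside the sentence itself, rendering the recovery, the recomputation, and the comparison all effective.

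I expect the main obstacle to lie in making this decision procedure rigorous. One must fix the construction underlying Corollary~\ref{Corollary3} as a single deterministic algorithm, so that $\gamma_{\Gamma,T}$ is a canonical syntactic object and the test $\alpha=\gamma_{\Gamma,T}$ is a well-posed decidable question rather than a test for logical equivalence (which would be undecidable). One must also argue that the top-level decomposition of a form~\eqref{eq5} sentence is recoverable unambiguously, which the distinguished occurrence of $\Upsilon_{\tau_{<}}$ and the trailing encoding disjunct should pin down. Granting these two points, membership in $\mathcal{CO}\mathcal{L}_\mathcal{N}(\tau_{<})$ is decided and the theorem follows.
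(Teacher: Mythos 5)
Your proposal follows essentially the same route as the paper's proof: the same decision procedure (parse the three-disjunct form, check the middle conjunct is the canonical $\Upsilon_{\tau_{<}}$, decode the machine from the encoding sentence $\psi_w$, recompute the characteristic sentence $\gamma_{\Gamma,T}$ via the construction of Corollary~\ref{Corollary3}, and compare syntactically), and the same capture argument via the logical equivalence of form~\eqref{eq5} with form~\eqref{eq2} and Theorem~\ref{Theorem4}. Your additional remarks on the canonicity of the Corollary~\ref{Corollary3} construction and the unambiguity of the top-level parse are points the paper leaves implicit, but they do not change the argument.
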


\begin{proof} Let us show how to effectively decide whether $\Phi\in \mathcal{CO}\mathcal{L}_\mathcal{N}(\tau_{<})$
or not, given an arbitrary $\Phi\in \mathcal{L}_\mathcal{N}(\tau_{<})$. At first, we verify whether $\Phi$ is of the
form $(\gamma \wedge \Gamma) \vee (\neg\gamma \wedge \Upsilon) \vee \psi$ or not, where $\Gamma$ and $\Upsilon$ are
both $\mathcal{L}_\mathcal{N}(\tau_{<})$-sen\-ten\-ces; $\gamma$ and $\psi$ are both first order sen\-ten\-ces. If this
is not the case, then $\Phi\not\in \mathcal{CO}\mathcal{L}_\mathcal{N}$. Otherwise, we apply the operator
${T_{\tau_{<}}}$ to $\Upsilon(R(x))$, and obtain $\Upsilon_{\tau_{<}}$.
If $\Upsilon\neq \Upsilon_{\tau_{<}}$, then $\Phi\not\in \mathcal{CO}\mathcal{L}_\mathcal{N}$. Otherwise, we check
whether $\psi$ is an encoding sen\-ten\-ce $\psi_w$ or not. If this is not the case, then $\Phi\not\in
\mathcal{CO}\mathcal{L}_\mathcal{N}$. Otherwise, we verify whether $w$ is a code of some
\mbox{$\mathcal{N}$-spe}\-ci\-fic Turing machine $T^\mathrm{MOD[\Gamma]}$ or not. If $w$ does not encode any
\mbox{$\mathcal{N}$-spe}\-ci\-fic Turing machine $T^\mathrm{MOD[\Gamma]}$, then $\Phi\not\in
\mathcal{CO}\mathcal{L}_\mathcal{N}$. Otherwise, we recover $T^\mathrm{MOD[\Gamma]}$ from its code $w$. Then, we
construct the characteristic
sen\-ten\-ce $\gamma_{\Gamma,T^\mathrm{MOD[\Gamma]}}$ for the pair $(\Gamma,T^\mathrm{MOD[\Gamma]})$. If
$\gamma=\gamma_{\Gamma,T^\mathrm{MOD[\Gamma]}}$, then $\Phi\in \mathcal{CO}\mathcal{L}_\mathcal{N}$. Otherwise,
$\Phi\not\in \mathcal{CO}\mathcal{L}_\mathcal{N}$.

Thus, the set $\mathcal{CO}\mathcal{L}_\mathcal{N}(\tau_{<})$ of $\mathcal{L}_\mathcal{N}(\tau_{<})$-sen\-ten\-ces is
recursive, and $\mathcal{CO}\mathcal{L}_\mathcal{N}$ is a logic that represents a decidable fragment of
$\mathcal{L}_\mathcal{N}$. Since form~\eqref{eq5} is logically equivalent to form~\eqref{eq2}, Theorem~\ref{Theorem4}
holds for form~\eqref{eq5} as well as for form~\eqref{eq2}. Then, a problem $\Pi\subseteq \mathrm{STRUC}[\tau_{<}]$ is
\mbox{$\mathcal{N}$-}comp\-le\-te if and only if there exists a
\mbox{$\mathcal{CO}\mathcal{L}_\mathcal{N}(\tau_{<})$-sen}\-ten\-ce $\Lambda$ defining $\Pi$. This concludes the proof
of the theorem. \end{proof}

\begin{corollary}
\label{corollary9} If a logic $\mathcal{L}_\mathcal{N}$ captures a complexity class $\mathcal{N}$ on all ordered
structures, then the logic $\mathcal{CO}\mathcal{L}_\mathcal{N}$ captures the complexity class of all
$\mathcal{N}$-complete problems on all ordered structures.\end{corollary}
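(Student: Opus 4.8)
The plan is to obtain the corollary directly from Theorem~\ref{Theorem8} by quantifying its statement over all admissible vocabularies. First I would unfold the two occurrences of the phrase ``on all ordered structures''. By the definition fixed in the Preliminaries, the hypothesis that $\mathcal{L}_\mathcal{N}$ captures $\mathcal{N}$ on all ordered structures means precisely that $\mathcal{L}_\mathcal{N}$ captures $\mathcal{N}$ on $\mathrm{STRUC}[\tau_{<}]$ for \emph{every} vocabulary $\tau_{<}$ containing $<$. Likewise, the desired conclusion that $\mathcal{CO}\mathcal{L}_\mathcal{N}$ captures the complexity class of all $\mathcal{N}$-complete problems on all ordered structures unfolds, for every such $\tau_{<}$, into the two capturing conditions: (i) every $\mathcal{CO}\mathcal{L}_\mathcal{N}(\tau_{<})$-sen\-ten\-ce defines an $\mathcal{N}$-complete model class, and (ii) every $\mathcal{N}$-complete model class $\Pi\subseteq\mathrm{STRUC}[\tau_{<}]$ is defined by some $\mathcal{CO}\mathcal{L}_\mathcal{N}(\tau_{<})$-sen\-ten\-ce.

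Next I would fix an arbitrary vocabulary $\tau_{<}$ containing $<$ and invoke Theorem~\ref{Theorem8}. The hypothesis supplies exactly its premise, namely that $\mathcal{L}_\mathcal{N}$ captures $\mathcal{N}$ on $\mathrm{STRUC}[\tau_{<}]$. Theorem~\ref{Theorem8} then delivers two facts at once: that $\mathcal{CO}\mathcal{L}_\mathcal{N}(\tau_{<})$ is a recursive set of $\mathcal{L}_\mathcal{N}(\tau_{<})$-sen\-ten\-ces (a decidable fragment), and that a model class $\Pi\subseteq\mathrm{STRUC}[\tau_{<}]$ is $\mathcal{N}$-complete if and only if there is a $\mathcal{CO}\mathcal{L}_\mathcal{N}(\tau_{<})$-sen\-ten\-ce defining $\Pi$. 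The ``if'' direction of this equivalence is precisely condition~(i), and the ``only if'' direction is precisely condition~(ii). Hence both capturing conditions hold on $\mathrm{STRUC}[\tau_{<}]$ for this arbitrary $\tau_{<}$.

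Finally, since $\tau_{<}$ was arbitrary, the capturing holds for every vocabulary containing $<$, which by the definition of ``on all ordered structures'' is exactly the statement of the corollary. The per-vocabulary recursiveness of each $\mathcal{CO}\mathcal{L}_\mathcal{N}(\tau_{<})$ guaranteed by Theorem~\ref{Theorem8} also confirms that $\mathcal{CO}\mathcal{L}_\mathcal{N}$ is a genuine logic in the sense of the Preliminaries, i.e. a vocabulary-indexed family of recursive sen\-ten\-ce languages. I expect no genuinely hard step: the corollary is simply the uniform-in-$\tau_{<}$ repackaging of Theorem~\ref{Theorem8}. The one point deserving a moment's care is verifying that nothing in the construction of $\mathcal{CO}\mathcal{L}_\mathcal{N}$ requires decidability \emph{uniform} across vocabularies beyond the per-vocabulary recursiveness already furnished by Theorem~\ref{Theorem8}; and indeed, since the Preliminaries define a logic only as a per-vocabulary recursive language, this holds automatically.
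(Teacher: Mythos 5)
Your proposal is correct and matches the paper's approach: the paper's own proof is just ``It is immediate from the theorem,'' and your argument---fixing an arbitrary vocabulary $\tau_{<}$ containing $<$, applying Theorem~\ref{Theorem8}, and then quantifying over all such vocabularies---is exactly the unpacking of that one-liner. No further comment is needed.
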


\begin{proof} It is immediate from the theorem. \end{proof}

Now, we proceed to unordered non-Aristotelian structures.
Let us introduce the following form:
\begin{equation}
(\Theta_{\Gamma,T^\mathrm{MOD[\Gamma]}} \wedge \Gamma) \vee
(\overline{\Theta}_{\Gamma,T^\mathrm{MOD[\Gamma]}} \wedge \Upsilon_\tau) \vee \psi_{\langle
T^\mathrm{MOD[\Gamma]}\rangle}\label{eq6}
\end{equation}

Note that form~\eqref{eq6} is logically equivalent to form~\eqref{eq4} since $\psi_{\langle
T^\mathrm{MOD[\Gamma]}\rangle}$ is identically false. However, in contrast to~\eqref{eq4}, form~\eqref{eq6} allows us
to easily construct a logic capturing the complexity class of all $\mathcal{N}$-complete problems on all
non-Aristotelian structures. Indeed, let $\mathcal{C}\mathcal{L}_\mathcal{N}$ be a mapping from every non-Aristotelian
vocabulary $\tau$ to $\mathcal{C}\mathcal{L}_\mathcal{N}(\tau)$, where $\mathcal{C}\mathcal{L}_\mathcal{N}(\tau)$
denotes the following set of $\mathcal{L}_\mathcal{N}(\tau)$-sen\-ten\-ces of form~\eqref{eq6}:
$$
\begin{array}{rl}
\{\Phi\in
\mathcal{L}_\mathcal{N}(\tau)\mid & \!\!\!\!\! \Phi =
(\Theta_{\Gamma,T^\mathrm{MOD[\Gamma]}} \wedge \Gamma) \vee
(\overline{\Theta}_{\Gamma,T^\mathrm{MOD[\Gamma]}} \wedge
\Upsilon_\tau) \vee \psi_{\langle T^\mathrm{MOD[\Gamma]}\rangle};\\ & \!\!\!\!
\Gamma\in \mathcal{L}_\mathcal{N}(\tau),\
T^\mathrm{MOD[\Gamma]}\ \textrm{is a
\textit{$\mathcal{N}$}-spe\-ci\-fic Turing machine}
\}.
\end{array}
$$
Then, the following theorem holds.

\begin{theorem}
\label{Theorem10} Let $\mathcal{L}_\mathcal{N}$ be a logic capturing a complexity class $\mathcal{N}$ (among the
classes $\mathrm{coNP}$, $\mathrm{NP}$, and $\mathrm{PSPACE}$) on $\mathrm{STRUC}[\tau]$ for a fixed arbitrary
non-Aristotelian vocabulary $\tau$. Then, $\mathcal{C}\mathcal{L}_\mathcal{N}$ is a decidable fragment of
$\mathcal{L}_\mathcal{N}$ that captures the complexity class of all $\mathcal{N}$-complete problems on
$\mathrm{STRUC}[\tau]$.\end{theorem}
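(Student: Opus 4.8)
The plan is to mirror the proof of Theorem~\ref{Theorem8}, adapting it from the ordered case (form~\eqref{eq2}/\eqref{eq5}) to the unordered non-Aristotelian case (form~\eqref{eq4}/\eqref{eq6}). The statement has two parts: that $\mathcal{C}\mathcal{L}_\mathcal{N}$ is a \emph{decidable} fragment of $\mathcal{L}_\mathcal{N}$, and that it \emph{captures} the class of all $\mathcal{N}$-complete problems on $\mathrm{STRUC}[\tau]$. For decidability, I would give a decision procedure that, on input an arbitrary $\mathcal{L}_\mathcal{N}(\tau)$-sentence $\Phi$, first syntactically checks whether $\Phi$ has the shape $(\Theta \wedge \Gamma) \vee (\overline{\Theta} \wedge \Upsilon) \vee \psi$, where $\Gamma,\Upsilon$ are $\mathcal{L}_\mathcal{N}(\tau)$-sentences and $\Theta,\overline{\Theta},\psi$ are appropriate subsentences. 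If not, reject. Otherwise, I apply the operator $T_\tau$ to $\Upsilon$ (isolating the occurrence of $R(x,y)$ as in the definition of $\Upsilon_\tau$) and test $\Upsilon \stackrel{?}{=} \Upsilon_\tau$; if they differ, reject. Then I verify that $\psi$ is an encoding sentence $\psi_w$ for some binary string $w$, and that $w$ is the code $\langle T^{\mathrm{MOD}[\Gamma]}\rangle$ of some $\mathcal{N}$-specific Turing machine; if either fails, reject.

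The crucial final decidability step is to recover $T^{\mathrm{MOD}[\Gamma]}$ from $w$ and then construct the characteristic sentence $\Theta_{\Gamma,T^\mathrm{MOD[\Gamma]}}$ (and its complement $\overline{\Theta}_{\Gamma,T^\mathrm{MOD[\Gamma]}}$) via Corollary~\ref{Corollary6}, checking whether the $\Theta$ and $\overline{\Theta}$ appearing in $\Phi$ coincide with the freshly constructed ones. If so, accept; otherwise reject. This is exactly where the non-Aristotelian setting diverges from the ordered one: in Theorem~\ref{Theorem8} one had a \emph{canonical} first-order characteristic sentence $\gamma_{\Gamma,T^\mathrm{MOD[\Gamma]}}$ produced deterministically from $(\Gamma, T^\mathrm{MOD[\Gamma]})$ by Corollary~\ref{Corollary3}, so the equality test $\gamma \stackrel{?}{=} \gamma_{\Gamma,T^\mathrm{MOD[\Gamma]}}$ was unambiguous. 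Here the construction from Corollary~\ref{Corollary6} routes through Fagin's theorem (and, for $\mathrm{coNP}$, through a negation of an $\mathrm{SO}\exists$-sentence), so the output is only canonical once we \emph{fix} a deterministic version of that construction. I expect this to be the main obstacle: I must pin down $\Theta_{\Gamma,T^\mathrm{MOD[\Gamma]}}$ and $\overline{\Theta}_{\Gamma,T^\mathrm{MOD[\Gamma]}}$ as the specific sentences output by a fixed effective procedure (a fixed linear-time machine deciding $S_{\Gamma,T^\mathrm{MOD[\Gamma]}}$, fed through a fixed Fagin translation), so that membership in $\mathcal{C}\mathcal{L}_\mathcal{N}(\tau)$ becomes a decidable syntactic identity rather than a semantic equivalence.

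Having established decidability, I would argue that $\mathcal{C}\mathcal{L}_\mathcal{N}$ captures the class of $\mathcal{N}$-complete problems. Since form~\eqref{eq6} is logically equivalent to form~\eqref{eq4} (the extra disjunct $\psi_{\langle T^\mathrm{MOD[\Gamma]}\rangle}$ is identically false by construction), Theorem~\ref{Theorem7} applies verbatim to sentences of form~\eqref{eq6}. Thus every $\mathcal{C}\mathcal{L}_\mathcal{N}(\tau)$-sentence defines an $\mathcal{N}$-complete problem, and conversely every $\mathcal{N}$-complete model class $\Pi\subseteq\mathrm{STRUC}[\tau]$ equals $\mathrm{MOD}[\Lambda]$ for some $\Lambda$ of form~\eqref{eq4}, hence of form~\eqref{eq6}. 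The only subtlety is that Theorem~\ref{Theorem7} requires $\mathcal{N}$ to be one of $\mathrm{coNP}$, $\mathrm{NP}$, $\mathrm{PSPACE}$, which matches precisely the hypothesis of the present theorem, so no additional work is needed there. I would close by noting that $\mathcal{C}\mathcal{L}_\mathcal{N}$, being a recursive subset of $\mathcal{L}_\mathcal{N}(\tau)$ in which every sentence defines a problem in $\mathcal{N}$ (inheriting condition~1 of capture from $\mathcal{L}_\mathcal{N}$) and in which every $\mathcal{N}$-complete problem is definable, satisfies both capture conditions for the class of $\mathcal{N}$-complete problems, completing the proof.
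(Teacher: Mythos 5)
Your proposal follows essentially the same route as the paper's proof: the same syntactic decision procedure (check the shape, recompute $\Upsilon_\tau$ via $T_\tau$, decode $\psi_w$ to recover $T^{\mathrm{MOD}[\Gamma]}$, rebuild $\Theta_{\Gamma,T^{\mathrm{MOD}[\Gamma]}}$ and $\overline{\Theta}_{\Gamma,T^{\mathrm{MOD}[\Gamma]}}$ and compare syntactically), followed by the same appeal to the logical equivalence of forms~\eqref{eq6} and~\eqref{eq4} and Theorem~\ref{Theorem7} for the capture claim. Your explicit remark that the construction of Corollary~\ref{Corollary6} must be fixed to a deterministic procedure so that the equality test is a genuine syntactic identity is a point the paper leaves implicit, and it is a worthwhile clarification rather than a deviation.
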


\begin{proof} Let us show how to effectively decide whether $\Phi\in \mathcal{C}\mathcal{L}_\mathcal{N}(\tau)$ or
not, given an arbitrary $\Phi\in \mathcal{L}_\mathcal{N}(\tau)$. At first, we verify whether $\Phi$ is of the form
$(\Theta \wedge \Gamma) \vee (\Xi \wedge \Upsilon) \vee \psi$ or not, where $\Theta$, $\Gamma$, $\Xi$, and $\Upsilon$
are $\mathcal{L}_\mathcal{N}(\tau)$-sen\-ten\-ces; $\psi$ is a first order sen\-ten\-ce. If this is not the case, then
$\Phi\not\in \mathcal{C}\mathcal{L}_\mathcal{N}$. Otherwise, we apply the operator ${T_{\tau}}$ to $\Upsilon(R(x,y))$,
and obtain $\Upsilon_\tau$.
If $\Upsilon\neq \Upsilon_\tau$, then $\Phi\not\in
\mathcal{C}\mathcal{L}_\mathcal{N}$. Otherwise, we check whether $\psi$ is an encoding sen\-ten\-ce $\psi_w$ or not. If
this is not the case, then $\Phi\not\in \mathcal{C}\mathcal{L}_\mathcal{N}$. Otherwise, we verify whether $w$ is a code
of some \mbox{$\mathcal{N}$-spe}\-ci\-fic Turing machine $T^\mathrm{MOD[\Gamma]}$ or not. If $w$ does not encode any
\mbox{$\mathcal{N}$-spe}\-ci\-fic Turing machine $T^\mathrm{MOD[\Gamma]}$, then $\Phi\not\in
\mathcal{C}\mathcal{L}_\mathcal{N}$. Otherwise, we recover $T^\mathrm{MOD[\Gamma]}$ from its code $w$. Then, we
construct the characteristic
sen\-ten\-ce $\Theta_{\Gamma,T^\mathrm{MOD[\Gamma]}}$ for the pair $(\Gamma,T^\mathrm{MOD[\Gamma]})$. Also, we
construct the complementary sentence $\overline{\Theta}_{\Gamma,T^\mathrm{MOD[\Gamma]}}$ for
$\Theta_{\Gamma,T^\mathrm{MOD[\Gamma]}}$, given $T^\mathrm{MOD[\Gamma]}$ and $\Gamma$. If
$\Theta=\Theta_{\Gamma,T^\mathrm{MOD[\Gamma]}}$ and $\Xi=\overline{\Theta}_{\Gamma,T^\mathrm{MOD[\Gamma]}}$, then
$\Phi\in \mathcal{C}\mathcal{L}_\mathcal{N}$. Otherwise, $\Phi\not\in \mathcal{C}\mathcal{L}_\mathcal{N}$.

Thus, the set $\mathcal{C}\mathcal{L}_\mathcal{N}(\tau)$ of $\mathcal{L}_\mathcal{N}(\tau)$-sen\-ten\-ces is recursive,
and $\mathcal{C}\mathcal{L}_\mathcal{N}$ is a logic that represents a decidable fragment of $\mathcal{L}_\mathcal{N}$.
Since form~\eqref{eq6} is logically equivalent to form~\eqref{eq4}, Theorem~\ref{Theorem7} holds for form~\eqref{eq6}
as well as for form~\eqref{eq4}. Then, a problem $\Pi\subseteq \mathrm{STRUC}[\tau]$ is \mbox{$\mathcal{N}$-}complete
if and only if there exists a $\mathcal{C}\mathcal{L}_\mathcal{N}(\tau)$-sen\-ten\-ce $\Lambda$ defining ${\Pi}$. This
concludes the proof of the theorem. \end{proof}

\begin{corollary}
\label{corollary11} If a logic $\mathcal{L}_\mathcal{N}$ captures a complexity class $\mathcal{N}$ on all structures,
then the logic $\mathcal{C}\mathcal{L}_\mathcal{N}$ captures the complexity class of all $\mathcal{N}$-complete
problems on all non-Aristotelian structures.\end{corollary}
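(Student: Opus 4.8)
The plan is to reduce the statement to a vocabulary-wise application of Theorem~\ref{Theorem10}. By the definition of capturing on all (non-Aristotelian) structures, the hypothesis ``$\mathcal{L}_\mathcal{N}$ captures $\mathcal{N}$ on all structures'' unfolds into the assertion that $\mathcal{L}_\mathcal{N}$ captures $\mathcal{N}$ on $\mathrm{STRUC}[\tau]$ for every vocabulary $\tau$, and in particular for every non-Aristotelian $\tau$. Likewise, the conclusion ``$\mathcal{C}\mathcal{L}_\mathcal{N}$ captures the complexity class of all $\mathcal{N}$-complete problems on all non-Aristotelian structures'' unfolds into the assertion that, for every non-Aristotelian vocabulary $\tau$, the fragment $\mathcal{C}\mathcal{L}_\mathcal{N}$ captures the complexity class of all $\mathcal{N}$-complete problems on $\mathrm{STRUC}[\tau]$. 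Thus it suffices to verify the latter for each fixed non-Aristotelian $\tau$ separately.

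First I would fix an arbitrary non-Aristotelian vocabulary $\tau$. The hypothesis supplies exactly what Theorem~\ref{Theorem10} requires of its input, namely that $\mathcal{L}_\mathcal{N}$ (with $\mathcal{N}$ one of $\mathrm{coNP}$, $\mathrm{NP}$, $\mathrm{PSPACE}$) captures $\mathcal{N}$ on $\mathrm{STRUC}[\tau]$. Applying Theorem~\ref{Theorem10} then yields that $\mathcal{C}\mathcal{L}_\mathcal{N}$ is a decidable fragment of $\mathcal{L}_\mathcal{N}$ and that it captures the complexity class of all $\mathcal{N}$-complete problems on $\mathrm{STRUC}[\tau]$; explicitly, every $\mathcal{C}\mathcal{L}_\mathcal{N}(\tau)$-sentence defines an $\mathcal{N}$-complete model class, and conversely every $\mathcal{N}$-complete model class in $\mathrm{STRUC}[\tau]$ is defined by some $\mathcal{C}\mathcal{L}_\mathcal{N}(\tau)$-sentence. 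Since $\tau$ was an arbitrary non-Aristotelian vocabulary, these two conditions hold for every non-Aristotelian $\tau$, which is precisely the desired conclusion.

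There is no real obstacle here beyond checking that the passage from a single vocabulary to all vocabularies is legitimate. The only point warranting attention is uniformity: $\mathcal{C}\mathcal{L}_\mathcal{N}$ is defined in the preceding construction as a single mapping sending each non-Aristotelian $\tau$ to the set $\mathcal{C}\mathcal{L}_\mathcal{N}(\tau)$ of form~\eqref{eq6}, so the fragment used in the conclusion coincides, vocabulary by vocabulary, with the fragment produced by Theorem~\ref{Theorem10}; there is no danger of the theorem yielding a $\tau$-dependent logic incompatible with the global mapping $\mathcal{C}\mathcal{L}_\mathcal{N}$. Hence the corollary follows immediately by quantifying Theorem~\ref{Theorem10} over all non-Aristotelian vocabularies, exactly as Corollary~\ref{corollary9} follows from Theorem~\ref{Theorem8}.
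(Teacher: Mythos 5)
Your proposal is correct and matches the paper's argument, which simply states that the corollary is immediate from Theorem~\ref{Theorem10}: fix an arbitrary non-Aristotelian vocabulary $\tau$, note the hypothesis gives capturing on $\mathrm{STRUC}[\tau]$, apply the theorem, and quantify over all such $\tau$. Your additional remark about the uniformity of the mapping $\mathcal{C}\mathcal{L}_\mathcal{N}$ across vocabularies is a reasonable point of care but does not change the route.
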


\begin{proof} It is immediate from the theorem. \end{proof}

\section{Canonical form and logic for \texorpdfstring{$\mathbf{NP\cap coNP}$}{NP n coNP}}

Let a vocabulary $\tau_{<}$ containing $<$ be given. In a very similar way (see section \ref{sec4}), we will show that
one cannot recursively enumerate all $\mathrm{SO}\exists(\tau_{<})$-sen\-ten\-ces defining problems in $\mathrm{NP\cap
coNP}$, provided that $\mathrm{NP\cap coNP\neq NP}$. Let $\Gamma$ denote a distinguished
\mbox{$\mathrm{SO}\exists(\tau_{<})$-sen}\-ten\-ce that defines some problem in $\mathrm{NP\cap coNP}$. Here the
notation $\Upsilon_{\tau_{<}}$ stands for a \mbox{$\mathrm{SO}\exists(\tau_{<})$-sen}\-ten\-ce defining a
\mbox{$\mathrm{NP}$-}co\-m\-p\-le\-te problem. Let us consider a \mbox{$\mathrm{SO}\exists(\tau_{<})$-sen}\-ten\-ce
$\Phi$ of the form $(\gamma_G \wedge \Gamma) \vee (\neg{\gamma}_G \wedge \Upsilon_{\tau_{<}})$.
Note that if the statement $L(G)=\Sigma^{*}$ holds, then $\gamma_G$ is logically valid. It follows that $\Phi$ is
logically equivalent to $\Gamma$, and $\Phi$ defines a problem in $\mathrm{NP\cap coNP}$ in this case. If the statement
$L(G)\neq\Sigma^{*}$ holds, then $\mathrm{MOD}[\gamma_G]$ is finite. Therefore, the set
$\mathrm{MOD}[\Upsilon_{\tau_{<}}]$ differs from the set $\mathrm{MOD}[\Phi]$ only in a finite set of
\mbox{$\tau_{<}$-}struc\-tu\-res, i.e. $\mathrm{MOD}[\Upsilon_{\tau_{<}}] \setminus \mathrm{MOD}[\Phi]$ and
$\mathrm{MOD}[\Phi] \setminus \mathrm{MOD}[\Upsilon_{\tau_{<}}]$ are both finite. Then, $\Phi$ defines a
$\mathrm{NP}$-complete problem that cannot be in $\mathrm{NP\cap coNP}$, on the assumption $\mathrm{NP\cap coNP\neq
NP}$. Thus, $\Phi$ defines a problem in $\mathrm{NP\cap coNP}$ if and only if the statement $L(G)=\Sigma^{*}$ holds.

Let an arbitrary context-free grammar $G$ with the terminal alphabet $\Sigma$ be given. We construct the
\mbox{$\mathrm{SO}\exists(\tau_{<})$-sen}\-ten\-ce $(\gamma_G \wedge \Gamma) \vee (\neg{\gamma}_G \wedge
\Upsilon_{\tau_{<}})$ for this grammar $G$. Suppose that we recursively enumerate all
\mbox{$\mathrm{SO}\exists(\tau_{<})$-sen}\-ten\-ces $\Phi_1,\Phi_2,\ldots$ that define problems in $\mathrm{NP\cap
coNP}$. If we find $\Phi_i$ such that $\Phi_i=(\gamma_G \wedge \Gamma) \vee (\neg{\gamma}_G \wedge
\Upsilon_{\tau_{<}})$, then we have $L(G)=\Sigma^{*}$. However, this contradicts the fact that the statement
$L(G)=\Sigma^{*}$ is algorithmically undecidable (see section \ref{sec4}). Thus, one cannot recursively enumerate all
\mbox{$\mathrm{SO}\exists(\tau_{<})$-sen}\-ten\-ces that define problems on $\mathrm{STRUC}[\tau_{<}]$
in~$\mathrm{NP\cap coNP}$, provided that $\mathrm{NP\cap coNP\neq NP}$.

Then, we state the following question. What is it about a $\mathrm{SO}\exists$-sentence that makes its defined problem
be in $\mathrm{NP\cap coNP}$? We will answer this question in the affirmative.

Gurevich \cite{Gurevich} conjectured that no logic captures the complexity class $\mathrm{NP\cap coNP}$. The conjecture
is based on the fact that the existence of such a logic (in the sense of Gurevich) for $\mathrm{NP\cap coNP}$ implies
the existence of a complete problem in $\mathrm{NP\cap coNP}$. However, there does not exist such a complete problem
relative to a certain oracle \cite{Sipser}. Therefore, $\mathrm{NP\cap coNP}$ very likely has no complete problem.

Nevertheless, it is not necessary for a logic to imply the existence of a complete problem. For example, no complete
problem is known in the complexity class $\mathrm{PH}$, whereas second order logic captures $\mathrm{PH}$ (see
\cite{Stockmeyer}).

We can extend our approach beyond complete problems. A canonical form (similar to form~\eqref{eq6}) will be developed
to define problems in $\mathrm{NP\cap coNP}$. Moreover, we will develop a logic that captures $\mathrm{NP\cap coNP}$ and
does not require the existence of any complete problem in $\mathrm{NP\cap coNP}$.

Let $\tau$ be a fixed vocabulary, where $\tau$ may be any (including Aristotelian) vocabulary. Then, with each pair
$(\Lambda,\Gamma)$ of \mbox{$\mathrm{SO}\exists(\tau)$-}sen\-ten\-ces, we associate the following set of
\mbox{$\tau$-}struc\-tu\-res:
$$
\begin{array}{rl}
  S_{\Lambda,\Gamma}\triangleq\{\:A\in
\mathrm{STRUC}[\tau] \mid    & \!\!\!\!\! (\forall B \in
\mathrm{STRUC}[\tau]) \textrm{[}\: \|B\|
> \ell^{(2)}(|\langle A\rangle|)\, \vee\\
& \!\!\!\!\! (B \models \Lambda \Leftrightarrow B \not\models
\Gamma)\:\textrm{]}\:\}.
\end{array}
$$

Note that the set $S_{\Lambda,\Gamma}$ has very similar properties as the previously considered set
$S_{\Gamma,T^\mathrm{MOD[\Gamma]}}$. We have the following theorem.

\begin{theorem}
\label{Theorem12} $S_{\Lambda,\Gamma} \in \mathrm{DTIME}[n]$.\end{theorem}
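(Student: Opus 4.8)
The plan is to mimic the proof of Theorem~\ref{Theorem5} almost verbatim, since $S_{\Lambda,\Gamma}$ has the same logical shape as $S_{\Gamma,T^\mathrm{MOD[\Gamma]}}$: a universally quantified condition over all structures $B$ whose universe is bounded by $\ell^{(2)}(|\langle A\rangle|)$, with the inner predicate now being ``$B\models\Lambda \Leftrightarrow B\not\models\Gamma$'' in place of the acceptance/modelhood equivalence~\eqref{eq3}. First I would set $\hat{n}=|\langle A\rangle|$ and observe that, using an ordinary Turing machine, the number $\ell^{(2)}(\hat{n})=\ell(\ell(\hat{n}))$ can be computed in time $O(\hat{n})$, exactly as in Theorem~\ref{Theorem5}.

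Next I would enumerate all $\tau$-structures $B$ with $\|B\|\le\ell^{(2)}(\hat{n})$. Since $|\langle B\rangle|\le p(\|B\|)$ for a polynomial $p$ depending only on $\tau$, each such $B$ has $|\langle B\rangle|$ bounded by $p(\log\log\hat{n})$, so the enumeration has $O(2^{p(\log\log\hat{n})})$ many structures and takes time of that order. For each enumerated $B$ I would check the inner condition by deciding separately whether $B\models\Lambda$ and whether $B\models\Gamma$. Because $\Lambda$ and $\Gamma$ are $\mathrm{SO}\exists(\tau)$-sentences, each model-checking instance lies in $\mathrm{NP}$ and is therefore decidable in at most exponential time in $|\langle B\rangle|$, i.e. in time $O(2^{p_1(\log\log\hat{n})})$ for a polynomial $p_1$. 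Thus verifying ``$B\models\Lambda\Leftrightarrow B\not\models\Gamma$'' for a single $B$ costs $O(2^{p'(\log\log\hat{n})})$ for some polynomial $p'$, and aggregating over the whole enumeration gives a total of $O(2^{p''(\log\log\hat{n})})$ for a polynomial $p''=p+p'$.

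The decisive estimate is then $\displaystyle\lim_{\hat{n}\to\infty}\frac{2^{p''(\log\log\hat{n})}}{\hat{n}}=0$ for every polynomial $p''$, which holds because $2^{p''(\log\log\hat{n})}$ grows only quasi-polynomially in $\log\hat{n}$ and hence is eventually dominated by $\hat{n}$ itself. This lets me bound the verification phase by $O(\hat{n})$. Combining the $O(\hat{n})$ cost of computing $\ell^{(2)}(\hat{n})$ with the $O(\hat{n})$ cost of the verification yields an overall running time $O(\hat{n})$, so $S_{\Lambda,\Gamma}\in\mathrm{DTIME}[n]$, which is what the theorem asserts. The argument is a faithful adaptation of Theorem~\ref{Theorem5} with the triple-logarithm $\ell^{(3)}$ and the logarithmic-time random-access bound replaced by the double-logarithm $\ell^{(2)}$ and the linear-time Turing-machine bound.

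I do not expect any serious obstacle here; the one point requiring genuine (if brief) attention is replacing the simulation-of-an-oracle-machine bound from Theorem~\ref{Theorem5} with the observation that model-checking \emph{two} $\mathrm{SO}\exists(\tau)$-sentences is still at most exponential in $|\langle B\rangle|$. Since evaluating $\Gamma$ and $\Lambda$ independently keeps us in $\mathrm{NP}$-style model checking, this substitution is routine, and the rest of the asymptotic bookkeeping transfers unchanged.
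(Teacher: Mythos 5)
Your proposal is correct and follows the paper's own proof essentially verbatim: compute $\ell^{(2)}(\hat{n})$ in $O(\hat{n})$, enumerate the $O(2^{p(\log\log\hat{n})})$ structures $B$ with $\|B\|\le\ell^{(2)}(\hat{n})$, check condition~\eqref{eq7} for each by exponential-time model checking of the $\mathrm{SO}\exists$ sentences, and conclude via the limit $2^{p''(\log\log\hat{n})}/\hat{n}\to 0$. The only cosmetic difference is that you justify the per-structure cost by checking $\Lambda$ and $\Gamma$ separately as $\mathrm{NP}$ model-checking instances, while the paper appeals to model checking in $\mathrm{NP\cap coNP}$ being at most exponential; these amount to the same bound.
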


\begin{proof} It is similar to the proof of Theorem~\ref{Theorem5} in many respects. Let us decide whether $A\in S_{\Lambda,\Gamma}$
or not, given a \mbox{$\tau$-}struc\-tu\-re $A$. Let $\hat{n}$ denote $|\langle A\rangle|$.

First, we need to compute the number $\ell^{(2)}(\hat{n})$.
It is obvious that we can find $\ell^{(2)}(\hat{n})$ in time $O(\hat{n})$.

Second, we enumerate all \mbox{$\tau$-}structures $B\in \mathrm{STRUC}[\tau]$ such that $\|B\|\le\ell^{(2)}(\hat{n})$.
Since $|\langle B\rangle|\le p(\|B\|)$, the value $|\langle B\rangle|$ is bounded above by $p(\log \log  \hat{n})$,
where $p$ is a polynomial dependent on the vocabulary $\tau$.  Therefore, this enumeration takes time $O(2^{p(\log \log
\hat{n})})$. Then, for each \mbox{$\tau$-}structure $B$ of the enumeration, we need to verify the following condition:
\begin{equation}B \models \Lambda
\Leftrightarrow B \not\models \Gamma.\label{eq7}
\end{equation}

If condition~\eqref{eq7} is not satisfied for some \mbox{$\tau$-}struc\-tu\-re $B$ of the enumeration, then the input
\mbox{$\tau$-}struc\-tu\-re $A$ does not belong to $S_{\Lambda,\Gamma}$. Otherwise, $A \in S_{\Lambda,\Gamma}$. Now,
let us estimate the running time for verifying condition~\eqref{eq7}.

We can verify \eqref{eq7} for one \mbox{$\tau$-}struc\-tu\-re $B$ in time $O(2^{p'(\log \log \hat{n})})$, where $p'$ is
a polynomial. This follows from the fact that any model-checking problem in $\mathrm{NP\cap coNP}$ is decidable at most
in exponential time.

Consequently, the total time to verify~\eqref{eq7} for all \mbox{$\tau$-}struc\-tu\-res $B$ of the enumeration is
$O(2^{p(\log \log \hat{n})}\cdot2^{p'(\log \log  \hat{n})})$, or, in short, $O(2^{p''(\log \log  \hat{n})})$, where
$p''=p+p'$. Since $\displaystyle\lim_{\hat{n}\to\infty}\frac{2^{p''(\log \log \hat{n})}}{\hat{n}}=0$ for any polynomial
$p''$,
this
time can be approximately estimated at most as $O(\hat{n})$.

Thus, the overall running time consists of the time for computing $\ell^{(2)}(\hat{n})$ and of the time for
verifying~\eqref{eq7} on all \mbox{$\tau$-}struc\-tu\-res $B$ of the enumeration. The both of these times are estimated
as $O(\hat{n})$. Therefore, so is the overall running time. This concludes the proof of the theorem. \end{proof}

\begin{corollary}
\label{corollary13} Given a pair $(\Lambda,\Gamma)$ of $\mathrm{SO}\exists(\tau)$-sen\-ten\-ces, one can construct a
$\mathrm{SO}\exists(\tau)$-sen\-ten\-ce that defines $S_{\Lambda,\Gamma}$.\end{corollary}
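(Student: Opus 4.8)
The plan is to deduce the statement from Fagin's theorem, in exactly the same way as the $\mathcal{N}=\mathrm{NP}$ case of Corollary~\ref{Corollary6}. First I would invoke Theorem~\ref{Theorem12} to obtain $S_{\Lambda,\Gamma}\in\mathrm{DTIME}[n]$, whence $S_{\Lambda,\Gamma}\in\mathrm{NP}$ because $\mathrm{DTIME}[n]\subseteq\mathrm{P}\subseteq\mathrm{NP}$. Moreover, the proof of Theorem~\ref{Theorem12} (which follows the proof of Theorem~\ref{Theorem5}) actually describes a deterministic Turing machine $M_{\Lambda,\Gamma}$ that, given $\langle A\rangle$, first computes $\ell^{(2)}(|\langle A\rangle|)$ and then verifies condition~\eqref{eq7} on every $B$ with $\|B\|\le\ell^{(2)}(|\langle A\rangle|)$, running in linear time. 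This machine can be written down effectively from the pair $(\Lambda,\Gamma)$, since $\Lambda$ and $\Gamma$ occur only inside the model-checking subroutines.

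Second, I would apply the effective direction of Fagin's theorem~\cite{Fagin}: over the class of all finite $\tau$-structures, $\mathrm{SO}\exists$ captures $\mathrm{NP}$, and its standard proof is constructive, producing from any (nondeterministic) polynomial-time Turing machine recognizing a model class a $\mathrm{SO}\exists(\tau)$-sen\-ten\-ce defining the same class. Feeding $M_{\Lambda,\Gamma}$ into this construction yields a $\mathrm{SO}\exists(\tau)$-sen\-ten\-ce whose model class is exactly $S_{\Lambda,\Gamma}$, which is what we want.

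Two points are worth isolating. First, the construction must be effective end-to-end: Theorem~\ref{Theorem12} hands us an effectively obtainable machine, and Fagin's construction is itself an effective transformation on machines, so their composition is effective. Second---and this is the only place where the present corollary differs from Corollary~\ref{Corollary6}---Fagin's theorem holds over an arbitrary vocabulary $\tau$, with no linear order required; this is precisely why we may allow $\tau$ to be any (even Aristotelian) vocabulary here. I do not foresee a genuine obstacle: the machine from Theorem~\ref{Theorem12} is deterministic and is \emph{a fortiori} an $\mathrm{NP}$ machine, so Fagin's theorem applies without modification, and the whole argument is a near-verbatim reprise of the $\mathrm{NP}$ case already treated in Corollary~\ref{Corollary6}.
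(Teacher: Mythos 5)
Your proposal is correct and matches the paper's argument: the paper's one-line proof simply cites $S_{\Lambda,\Gamma}\in\mathrm{DTIME}[n]$ and implicitly applies the constructive direction of Fagin's theorem, exactly as in the $\mathrm{NP}$ case of Corollary~\ref{Corollary6}, which is precisely the route you take. Your added remarks on effectivity and on Fagin's theorem holding for arbitrary (even Aristotelian) vocabularies are accurate elaborations of the same argument, not a different one.
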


\begin{proof} It follows from the statement $S_{\Lambda,\Gamma} \in \mathrm{DTIME}[n]$. \end{proof}

Then, by $\Theta_{\Lambda,\Gamma}$ we denote the constructed \mbox{$\mathrm{SO}\exists(\tau)$-}sen\-ten\-ce defining
the model class $S_{\Lambda,\Gamma}$. This $\Theta_{\Lambda,\Gamma}$ characterizes $\Lambda$ and $\Gamma$ in the following way:
if $\Theta_{\Lambda,\Gamma}$ is logically valid, then $\mathrm{MOD}[\Lambda]$ and $\mathrm{MOD}[\Gamma]$ are both in
$\mathrm{NP\cap coNP}$. We call the \mbox{$\mathrm{SO}\exists(\tau)$-}sen\-ten\-ce $\Theta_{\Lambda,\Gamma}$ the
characteristic sen\-ten\-ce for the pair $(\Lambda,\Gamma)$. Then, the following theorem holds.

\begin{theorem}
\label{Theorem14} Let $\Pi\subseteq \mathrm{STRUC}[\tau]$ be a model class. Then, $\Pi$ is in $\mathrm{NP\cap coNP}$ if
and only if there exists a $\mathrm{SO}\exists(\tau)$-sen\-ten\-ce $\Omega$ such that ${\mathrm{MOD}[\Omega]=\Pi}$ and
$\Omega$ is of the form
\begin{equation}
(\Theta_{\Lambda,\Gamma} \wedge \Gamma) \vee
\psi_{\langle\Lambda\rangle}\label{eq8}
\end{equation}
where $\Lambda$ and $\Gamma$ are both $\mathrm{SO}\exists(\tau)$-sen\-ten\-ces; $\Theta_{\Lambda,\Gamma}$ the
characteristic sen\-ten\-ce for $(\Lambda,\Gamma)$; $\psi_{\langle\Lambda\rangle}$ the sen\-ten\-ce encoding the binary
string $\langle\Lambda\rangle$.\end{theorem}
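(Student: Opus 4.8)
The plan is to mirror the two-part structure used in the proofs of Theorems~\ref{Theorem4} and~\ref{Theorem7}: first prove soundness (every $\Omega$ of form~\eqref{eq8} defines a problem in $\mathrm{NP\cap coNP}$), then prove completeness (every problem in $\mathrm{NP\cap coNP}$ is defined by some $\Omega$ of this form). The crucial preliminary observation is that $\psi_{\langle\Lambda\rangle}$ is identically false, so $\Omega$ is logically equivalent to $\Theta_{\Lambda,\Gamma}\wedge\Gamma$; the encoded disjunct carries no semantic weight and is present only so that a later fragment can be made recursive, exactly as in form~\eqref{eq6}.

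For soundness I would first record the dichotomy governing $\Theta_{\Lambda,\Gamma}$, arguing just as for $S_{\Gamma,T^\mathrm{MOD[\Gamma]}}$. Either the biconditional ``$B\models\Lambda \Leftrightarrow B\not\models\Gamma$'' holds for every $B\in\mathrm{STRUC}[\tau]$, in which case the defining condition of $S_{\Lambda,\Gamma}$ is satisfied vacuously for every $A$, so $S_{\Lambda,\Gamma}=\mathrm{STRUC}[\tau]$ and $\Theta_{\Lambda,\Gamma}$ is logically valid; or it fails on some structure, and then taking a least failing $B_{0}$, membership of $A$ in $S_{\Lambda,\Gamma}$ forces $\ell^{(2)}(|\langle A\rangle|)<\|B_{0}\|$, which by monotonicity of $\ell^{(2)}$ bounds $|\langle A\rangle|$ and hence makes $S_{\Lambda,\Gamma}$ finite. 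In the valid case the stated characteristic property gives $\mathrm{MOD}[\Gamma]\in\mathrm{NP\cap coNP}$, and since $\Omega$ is then logically equivalent to $\Gamma$ we conclude $\mathrm{MOD}[\Omega]\in\mathrm{NP\cap coNP}$. In the finite case $\mathrm{MOD}[\Theta_{\Lambda,\Gamma}]=S_{\Lambda,\Gamma}$ is finite, so $\mathrm{MOD}[\Omega]=\mathrm{MOD}[\Theta_{\Lambda,\Gamma}]\cap\mathrm{MOD}[\Gamma]$ is finite and therefore lies in $\mathrm{P}\subseteq\mathrm{NP\cap coNP}$.

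For completeness, let $\Pi\in\mathrm{NP\cap coNP}$ be given. By Fagin's theorem~\cite{Fagin} I would take an $\mathrm{SO}\exists(\tau)$-sentence $\Gamma$ with $\mathrm{MOD}[\Gamma]=\Pi$; since $\Pi\in\mathrm{coNP}$ its complement $\mathrm{STRUC}[\tau]\setminus\Pi$ is in $\mathrm{NP}$, so Fagin's theorem again supplies an $\mathrm{SO}\exists(\tau)$-sentence $\Lambda$ with $\mathrm{MOD}[\Lambda]=\mathrm{STRUC}[\tau]\setminus\Pi=\mathrm{STRUC}[\tau]\setminus\mathrm{MOD}[\Gamma]$. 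Then ``$B\models\Lambda \Leftrightarrow B\not\models\Gamma$'' holds for every $B$, so by the first case of the dichotomy $\Theta_{\Lambda,\Gamma}$ is logically valid, and $\Omega=(\Theta_{\Lambda,\Gamma}\wedge\Gamma)\vee\psi_{\langle\Lambda\rangle}$ is logically equivalent to $\Gamma$. Hence $\mathrm{MOD}[\Omega]=\Pi$, and $\Omega$ has the required form.

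I expect the main obstacle to be making the dichotomy fully rigorous for $S_{\Lambda,\Gamma}$ — in particular justifying that failure of the biconditional on even a single structure already forces finiteness through the slow growth of $\ell^{(2)}$ — though this step parallels the corresponding argument for $S_{\Gamma,T^\mathrm{MOD[\Gamma]}}$ preceding Theorem~\ref{Theorem7} and uses Theorem~\ref{Theorem12} and Corollary~\ref{corollary13} to ensure $\Theta_{\Lambda,\Gamma}$ exists as an $\mathrm{SO}\exists(\tau)$-sentence. A secondary point worth checking is that a finite model class, being decidable in polynomial (indeed constant) time, genuinely lies in $\mathrm{NP\cap coNP}$; this is immediate but should be stated to close the finite case.
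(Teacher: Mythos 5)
Your proposal is correct and follows essentially the same two-part argument as the paper: the soundness direction splits on whether $\mathrm{MOD}[\Lambda]$ is exactly the complement of $\mathrm{MOD}[\Gamma]$ (giving validity of $\Theta_{\Lambda,\Gamma}$ and hence $\Omega\equiv\Gamma$ with $\mathrm{MOD}[\Gamma]\in\mathrm{NP\cap coNP}$) versus not (giving a finite, hence trivially $\mathrm{NP\cap coNP}$, model class), and the completeness direction uses Fagin's theorem to obtain $\Lambda$ defining $\mathrm{STRUC}[\tau]\setminus\mathrm{MOD}[\Gamma]$ so that $\Theta_{\Lambda,\Gamma}$ is valid. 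Your added justification of the validity/finiteness dichotomy via the unboundedness of $\ell^{(2)}$ is a detail the paper leaves implicit but is consistent with its treatment of $S_{\Gamma,T^{\mathrm{MOD}[\Gamma]}}$.
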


\begin{proof} For short, we denote form~\eqref{eq8} by $\Phi_{\Lambda,\Gamma}$.

First, we will prove that $\Phi_{\Lambda,\Gamma}$ defines a problem in $\mathrm{NP\cap coNP}$ for any pair
$(\Lambda,\Gamma)$ of \mbox{$\mathrm{SO}\exists(\tau)$-}sen\-ten\-ces. Suppose that
$\mathrm{MOD}[\Gamma]=\mathrm{STRUC}[\tau]\setminus\mathrm{MOD}[\Lambda]$. Then, $\mathrm{MOD}[\Gamma]\in
\mathrm{coNP}$, and, therefore, we have $\mathrm{MOD}[\Gamma]\in \mathrm{NP\cap coNP}$. Note that
$\Theta_{\Lambda,\Gamma}$ is logically valid in this case. Then, $\Phi_{\Lambda,\Gamma}$ is logically equivalent to
$\Gamma$. It follows that $\Phi_{\Lambda,\Gamma}$ defines the problem $\mathrm{MOD}[\Gamma]$ in $\mathrm{NP\cap coNP}$.
Now, suppose that $\mathrm{MOD}[\Gamma]\neq\mathrm{STRUC}[\tau]\setminus\mathrm{MOD}[\Lambda]$. Then,
$\mathrm{MOD}[\Theta_{\Lambda,\Gamma}]$ is finite, and so is $\mathrm{MOD}[\Phi_{\Lambda,\Gamma}]$. Since the set
$\mathrm{MOD}[\Phi_{\Lambda,\Gamma}]$ is finite, $\mathrm{MOD}[\Phi_{\Lambda,\Gamma}]\in \mathrm{NP\cap coNP}$ in this
case as well.

Second, we will prove that any problem in $\mathrm{NP\cap coNP}$ can be defined by means of the form
$\Phi_{\Lambda,\Gamma}$. Let $\Gamma$ be a $\mathrm{SO}\exists(\tau)$-sen\-ten\-ce defining an arbitrary problem in
$\mathrm{NP\cap coNP}$. Then, $\mathrm{MOD}[\Gamma]$ and $\mathrm{STRUC}[\tau]\setminus\mathrm{MOD}[\Gamma]$ are both
in $\mathrm{NP}$. Therefore, there exists a $\mathrm{SO}\exists(\tau)$-sen\-ten\-ce $\Lambda$ that defines the model
class $\mathrm{STRUC}[\tau]\setminus\mathrm{MOD}[\Gamma]$. Note that the characteristic sentence
$\Theta_{\Lambda,\Gamma}$ for the pair $(\Lambda, \Gamma)$ is logically valid in this case. Moreover, $\Gamma$ is
logically equivalent to $\Phi_{\Lambda,\Gamma}$. Since $\Gamma$ defines a problem in $\mathrm{NP\cap coNP}$, so does
$\Phi_{\Lambda,\Gamma}$. This concludes the proof of the theorem.
\end{proof}

Thus, form~\eqref{eq8} can serve as a canonical form for $\mathrm{NP\cap coNP}$. Now, we can use form~\eqref{eq8} in
order to develop a logic capturing $\mathrm{NP\cap coNP}$. By $\mathcal{L}_\mathrm{NP\cap coNP}$ we mean a mapping from
every vocabulary $\tau$ to $\mathcal{L}_\mathrm{NP\cap coNP}(\tau)$, where $\mathcal{L}_\mathrm{NP\cap coNP}(\tau)$
denotes the following set of $\mathrm{SO}\exists(\tau)$-sen\-ten\-ces:
$$
\begin{array}{rl}
\{\Phi\in \mathrm{SO}\exists(\tau)\mid & \!\!\!\!\!
\Phi = (\Theta_{\Lambda,\Gamma} \wedge \Gamma) \vee
\psi_{\langle\Lambda\rangle}; \ \Gamma\in
\mathrm{SO}\exists(\tau), \Lambda\in \mathrm{SO}\exists(\tau)
\}.
\end{array}
$$
Then, the following theorem holds.

\begin{theorem}
\label{Theorem15} $\mathcal{L}_\mathrm{NP\cap coNP}$ is a decidable fragment of $\mathrm{SO}\exists$ logic that
captures $\mathrm{NP\cap coNP}$.\end{theorem}

\begin{proof} It is similar to the proof of Theorem~\ref{Theorem10}. Let us fix an arbitrary vocabulary $\tau$. We will show how to
effectively decide whether $\Phi\in \mathcal{L}_\mathrm{NP\cap coNP}(\tau)$ or not, given an arbitrary $\Phi\in
\mathrm{SO}\exists(\tau)$. At first, we verify whether $\Phi$ is of the form $(\Theta \wedge \Gamma) \vee \psi$ or not,
where $\Gamma$ and $\Theta$ are both $\mathrm{SO}\exists(\tau)$-sen\-ten\-ces;  $\psi$ is a first order sen\-ten\-ce.
If this is not the case, then $\Phi\not\in \mathcal{L}_\mathrm{NP\cap coNP}$. Otherwise, we check whether $\psi$ is an
encoding sen\-ten\-ce $\psi_w$ or not. If this is not the case, then $\Phi\not\in \mathcal{L}_\mathrm{NP\cap coNP}$.
Otherwise, we check whether $w$ encodes some well-formed \mbox{$\mathrm{SO}\exists(\tau)$-}sen\-ten\-ce $\Lambda$ or
not. If this is not the case, then $\Phi\not\in \mathcal{L}_\mathrm{NP\cap coNP}$. Otherwise, we recover $\Lambda$ from
its code $w$. Then, we construct the characteristic
sen\-ten\-ce $\Theta_{\Lambda,\Gamma}$ for the pair $(\Lambda, \Gamma)$. If $\Theta=\Theta_{\Lambda,\Gamma}$, then
$\Phi\in \mathcal{L}_\mathrm{NP\cap coNP}$. Otherwise, $\Phi\not\in \mathcal{L}_\mathrm{NP\cap coNP}$.

Thus, the set $\mathcal{L}_\mathrm{NP\cap coNP}(\tau)$ of $\mathrm{SO}\exists(\tau)$-sen\-ten\-ces is recursive, and
$\mathcal{L}_\mathrm{NP\cap coNP}$ is a logic that represents a decidable fragment of $\mathrm{SO}\exists$. Note that
Theorem~\ref{Theorem14} implies that a problem $\Pi\subseteq \mathrm{STRUC}[\tau]$ is in $\mathrm{NP\cap coNP}$ if and
only if there exists a $\mathcal{L}_\mathrm{NP\cap coNP}(\tau)$-sen\-ten\-ce $\Omega$ defining $\Pi$. This concludes
the proof of the theorem. \end{proof}

\section{Concluding remarks}

We have developed canonical forms for problems that are complete via Turing reductions. Also, we have shown that any
complete problem
can be easily defined by one of these forms. Besides, we
have provided an evidence that there cannot be any complete problem on Aristotelian structures in the complexity
classes $\mathrm{P}$, $\mathrm{coNP}$, $\mathrm{NP}$, and $\mathrm{PSPACE}$.

Logics for complete problems in the complexity classes $\mathrm{NL}$, $\mathrm{P}$, $\mathrm{coNP}$, $\mathrm{NP}$, and
$\mathrm{PSPACE}$ have been developed on the basis of the canonical form~\eqref{eq5} that defines these problems on
ordered structures. On the other hand, logics for complete problems in the complexity classes $\mathrm{coNP}$,
$\mathrm{NP}$, and $\mathrm{PSPACE}$ have been developed on the basis of the canonical form~\eqref{eq6} that defines
these problems on unordered non-Aristotelian structures. It is very likely that analogous canonical forms can be also
developed to construct logics for complete problems in other complexity classes.

Besides, we have extended our approach beyond complete problems. Using a similar form, we have developed a logic that
captures the complexity class $\mathrm{NP\cap coNP}$ which very likely contains no complete problem. Note that a
recursive enumeration of all problems in $\mathrm{NP\cap coNP}$ was considered to be difficult (see, for instance,
\cite{Dawar,Papadimitriou}). Up to our knowledge, no such recursive enumeration was known till now. Dawar \cite{Dawar}
pointed out: ``...the natural set of witnesses for $\mathrm{NP\cap coNP}$ is not recursively enumerable. Thus, finding
a recursively enumerable set of witnesses would require a fundamentally new characterization of the class and would be
a major breakthrough in complexity theory.''. Nevertheless, using the logic $\mathcal{L}_\mathrm{NP\cap coNP}$ to
capture $\mathrm{NP\cap coNP}$, we recursively enumerate all problems in $\mathrm{NP\cap coNP}$ by enumerating the
sentences of this logic.

\begin{figure}[htbp]
\centering
\ifpdf
\includegraphics[width=\textwidth]{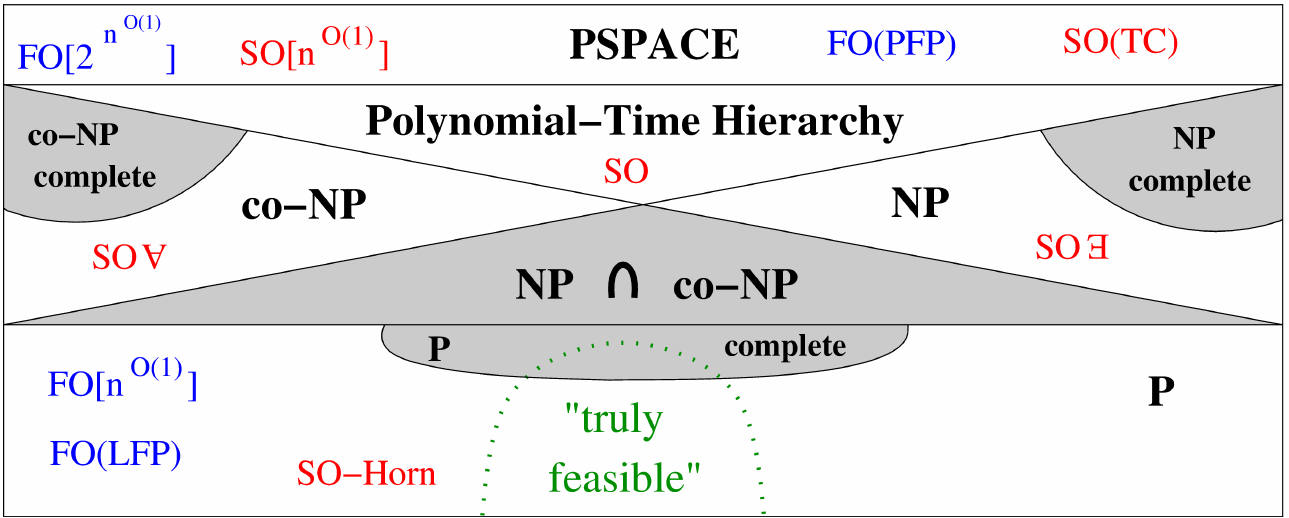}
\else
\includegraphics[width=\textwidth]{pic2.eps}
\fi
\caption[]{The World of Descriptive and Computational Complexity from $\mathrm{P}$ to \mbox{$\mathrm{PSPACE}$}:
shaded areas indicate our developed logics for \mbox{$\mathrm{NP}$-}comp\-le\-te problems, \mbox{$\mathrm{coNP}$-}comp\-le\-te problems,
\mbox{$\mathrm{P}$-}comp\-le\-te problems, and $\mathrm{NP}\cap \mathrm{coNP}$.}
\label{pic2}
\end{figure}

In conclusion, we have modified a fragment of Immerman's diagram \cite{Immerman} in respect to the complexity classes
from $\mathrm{P}$ to $\mathrm{PSPACE}$, as shown in Figure~\ref{pic2} (cf. Figure~\ref{pic1}). For purposes of clarity, in the diagram we
have permitted ourself to shade areas depicting the following complexity classes: \mbox{$\mathrm{NP}$-}complete problems,
\mbox{$\mathrm{coNP}$-}complete problems, \mbox{$\mathrm{P}$-}complete problems, and ${\mathrm{NP\cap coNP}}$ for which we have developed
logics for the first time. Moreover, for the complexity classes of \mbox{$\mathrm{PSPACE}$-}comp\-le\-te problems and
\mbox{$\mathrm{NL}$-}comp\-le\-te problems which are not depicted in Immerman's diagram, we have developed logics as well.

{%

}
\end{sloppypar}
\end{document}